\title{Improved Bound on the Number of Pseudoline Arrangements via the Zone Theorem}
\author{Justin Dallant}{Department of Computer Science, Aarhus University, Denmark} {justindallant@cs.au.dk}{https://orcid.org/0000-0001-5539-9037}{}
\authorrunning{J. Dallant}
\keywords{counting, pseudoline, cutpath, zone theorem}
\newcommand{\DeclareAutoPairedDelimiter}[3]{%
  \expandafter\DeclarePairedDelimiter\csname Auto\string#1\endcsname{#2}{#3}%
  \begingroup\edef\x{\endgroup
    \noexpand\DeclareRobustCommand{\noexpand#1}{%
      \expandafter\noexpand\csname Auto\string#1\endcsname*}}%
  \x}
\DeclareAutoPairedDelimiter\ceil{\lceil}{\rceil}
\DeclareAutoPairedDelimiter\floor{\lfloor}{\rfloor}
\renewcommand{\epsilon}{\varepsilon}
\DeclareMathOperator{\h}{h_2}
\begin{document}

\maketitle

\begin{abstract}
Pseudoline arrangements are fundamental objects in discrete and computational geometry, and different works have tackled the problem of improving the known bounds on the number of simple arrangements of $n$ pseudolines over the past decades. The lower bound in particular has seen two successive improvements in recent years (Dumitrescu and Mandal in 2020 and Cortés Kühnast et al.\ in 2024). Here we focus on the upper bound, and show that for large enough $n$, there are at most $2^{0.6496n^2}$ different simple arrangements of $n$ pseudolines. This follows a series of incremental improvements starting with work by Knuth in 1992 showing a bound of roughly $2^{0.7925n^2},$ then a bound of $2^{0.6975n^2}$ by Felsner in 1997, and finally the previous best known bound of $2^{0.6572n^2}$ by Felsner and Valtr in 2011. The improved bound presented here follows from a simple argument to combine the approach of this latter work with the use of the Zone Theorem.
\end{abstract}

\section{Introduction}

An arrangement of pseudolines in the Euclidean plane is a finite set of $n$ simple curves extending to infinity in both directions such that every two curves intersect at exactly one point where they cross. They have been the subject of numerous works, the earliest of which going back to at least the 1920's~\cite{levi1926teilung}. We refer the reader to the corresponding chapter of the Handbook \cite[Chapt.\ 5]{handbook} for a thorough overview of the subject.

We will always assume pseudoline arrangements are \emph{simple}, meaning that no three pseudolines intersect at a common point. Following \cite{felsner2011coding}, we will consider our arrangements to be \emph{marked}\footnote{Note that the final bound we obtain here applies similarly to unmarked arrangements or arrangements in the projective plane, as these differences only affect lower order factors.}, meaning that they come with a distinguished unbounded cell called the \emph{north-cell} (the unique cell lying on the other side of all pseudolines is called the \emph{south-cell}). Two pseudoline arrangements are \emph{isomorphic} if one can be mapped to the other one by an orientation preserving homeomorphism of the plane which preserves the north-cell. 

\begin{figure}[t]
    \centering
    \includegraphics[width=0.9\linewidth]{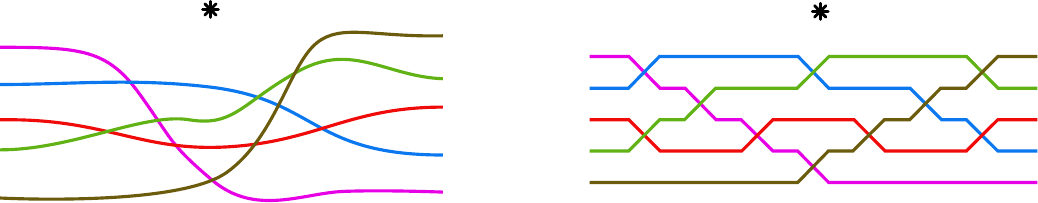}
    \caption{An arbitrarily drawn pseudoline arrangement (left) and the same arrangement drawn as a wiring diagram (right). In both cases, the north-cell is marked by a star.}
    \label{fig:wiring_diagram}
\end{figure}
\subparagraph*{Wiring diagrams.}
One particularly nice way of drawing (in our case simple) pseudoline arrangements, introduced by Goodman~\cite{goodman1980proof} is through \emph{wiring diagrams}. In a wiring diagram, the $n$ pseudolines are constrained to lie on $n$ horizontal lines (or ``wires''), except around points where they cross another pseudoline and move to a neighboring wire. Figure~\ref{fig:wiring_diagram} shows an example of an arbitrarily drawn pseudoline arrangement next to a realization of the same arrangement as a wiring diagram.

\subparagraph*{Number of pseudoline arrangements.}

 It is known that the number $B_n$ of non-isomorphic arrangements of $n$ pseudolines is $2^{\Theta(n^2)}$ but the leading hidden constant in the exponent is not known. Let $c^- = \liminf_{n\to \infty}{\frac{\lg B_n}{n^2}}$ and $c^+ = \limsup_{n\to \infty}{\frac{\lg B_n}{n^2}}$ (throughout the paper, we use $\lg$ to denote the logarithm in base $2$). It is in fact an open question whether $c^- = c^+$, i.e.\ whether the limit of $\frac{\lg B_n}{n^2}$ exists. The first bound on $c^-$ was given by Goodman and Pollak in the 1980's~\cite{goodman1980combinatorial} who showed $c^- \geq 1/8$. This was improved in the 1990's to $c^- \geq 1/6$ by Knuth~\cite{knuth1992axioms}, who also showed $c^+ < 0.79249$ and conjectured $c^+ \leq 0.5$. The upper bound was then improved by Felsner~\cite{felsner1996number} to $c^+ < 0.69743$ and finally to $c^+ < 0.69743$ in 2011 by Felsner and Valtr~\cite{felsner2011coding} who also showed $c^- \geq 0.1887$. The lower bound saw some recent additional improvements. In 2020, Dumitrescu and Mandal~\cite{dumitrescu2020new} showed $c^- > 0.2083$. In 2024, Cortés Kühnast, Felsner and Scheucher~\cite{kuhnast2024improved} and Dallant~\cite{dallant2024improved} independently discovered similar methods to improve the previous construction, which resulted in a merged paper~\cite{DBLP:conf/compgeom/KuhnastDFS24} showing the currently best known bound of $c^- > 0.2721$.

\begin{figure}[h]
    \centering
    \includegraphics[width=0.5\linewidth]{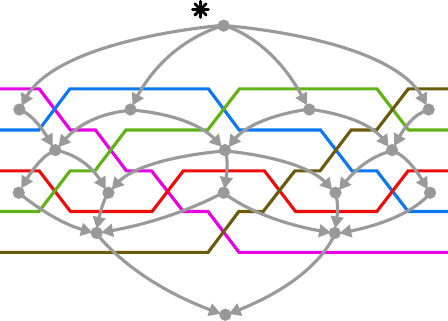}
    \caption{A pseudoline arrangement and its associated directed acyclic dual graph.}
    \label{fig:dual_graph}
\end{figure}
\subparagraph*{Cutpaths and upper bound.} Consider the directed acyclic graph which is dual to the wiring diagram, with edges oriented from north to south (see Figure \ref{fig:dual_graph}). A path from the north-cell to the south-cell in the graph is called a \emph{cutpath} of the arrangement. Informally, the set of all cutpaths represents all combinatorially distinct ways one can insert a new pseudoline in the pseudoline arrangement. The approach of Felsner and Valtr for the upper bound, as did that of Knuth earlier, relies on bounding the quantity $\gamma_n$, defined as the maximum number of cutpaths an arrangement of $n$ pseudolines can have. As shared in a remark by Knuth~\cite[p.\ 97]{knuth1992axioms}, an argument by Bern shows that one can use the sharpest version of the Zone Theorem to show that $\gamma_n \leq O(2.711^n)$, yielding the bound $c^+ < 0.7194$. Knuth also conjectured that the ``bubblesort arrangements'' of size $n$, which (in a slightly different setting than the one presented here) have $n2^{n-2}$ cutpaths, maximize this number.\footnote{Knuth verified this to be true for $n\leq 9$. However, in the setting of the present paper, this is not true even for small $n$, where the ``odd-even sort arrangements'' (see Section \ref{section:many_cutpaths}) maximize this number at least up to $n=10$ (note that in Knuth's setting, these also have $n2^{n-2}$ cutpaths). Thanks to Günter Rote for verifying this, and for pointing out that the arrangement of Figure \ref{fig:bubble20} was mistakenly referred to as a ``bubblesort arrangement'' in an earlier version of this paper (the same confusion between bubblesort arrangements and odd-even sort arrangements also appears in \cite{felsner2011coding}).} If true, this would yield $c^+ \leq 0.5$. Felsner and Valtr show $\gamma_n \leq O(2.487^n)$ without using the Zone Theorem, resulting in the previously best known bound of $c^+ < 0.6572$. They also report that a construction by Ond\v{r}ej Bílka shows $\gamma_n\geq \Omega(2.076^n)$, thus disproving Knuth's conjecture on the maximum number of cutpaths.

We use a simple argument to combine their method with the use of the Zone Theorem, thus improving the upper bound on the number of cutpaths to roughly $\gamma_n \leq O(2.461^n)$, in turn resulting in $c^+ < 0.6496$. The argument, given in the next section, results in a bound expressed as the maximum of a certain function over some domain. We then explain how to obtain an upper bound on this function through a branch and bound through linear relaxation based approach, resulting in our final bound. We also describe in Section \ref{section:many_cutpaths} a family of arrangements showing $\gamma_n \geq \Omega(2.083^n)$, a slight improvement on the construction by Bílka. 

\section{Improved upper bound on the number of pseudoline arrangements}

We start by giving a brief overview of the approach of Felsner and Valtr~\cite{felsner2011coding} to bound the maximum possible number $\gamma_n$ of cutpaths in an arrangement of $n$ pseudolines. Using the fact that $B_{n+1} \leq \gamma_n \cdot B_n$, this then yields a bound on $B_n$ by induction.

Consider an arrangement $A$ of $n$ pseudolines represented as a wiring diagram, together with a cutpath $p$ traversing $A$ from the north cell to the south cell. The path of $p$ can be described by choosing for each cell it reaches, which pseudoline bounding the cell from below it crosses to exit the cell. We classify these exits into different categories. Consider a cell $c$ with $d$ exits (i.e.\ $d$ pseudolines bounding it from below), which we label $1,2,\ldots,d$ from left to right. If $d=1$, we call the single existing exit the \emph{unique exit} of $c$. Otherwise, we call exit $1$ the \emph{left exit} of $c$ and exit $d$ the \emph{right exit} of $c$. All other exits are called \emph{middle exits} of $c$.

The following fact was proven by Knuth.
\begin{lemma}[{\cite{knuth1992axioms}}]\label{lemma:middles}
    Any pseudoline in $A$ can appear at most once as a middle exit of a cell visited by $p$.
\end{lemma}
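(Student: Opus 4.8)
The plan is to fix a pseudoline $\ell$ and study the part of $p$ lying above $\ell$. A preliminary remark: in the dual DAG every edge lying on $\ell$ is oriented from the side of $\ell$ containing the north-cell to the side containing the south-cell, so any directed path from the north-cell to the south-cell crosses $\ell$ exactly once; hence $p$ is monotone in the number of pseudolines crossed so far and has exactly $n$ steps. (In particular the content of the lemma is about which \emph{cells} $p$ visits, not about which exit it takes at each one, since $\ell$ is the exit taken at exactly one visited cell anyway.) The first real step is to pin down the shape of a cell having $\ell$ as a middle exit. Suppose $\ell$ is a middle exit of a visited cell $c$. Since $\ell$ bounds $c$ from below, $c$ lies just above a single edge $e$ of $\ell$; since $e$ is neither the leftmost nor the rightmost lower edge of $c$, the lower edges of $c$ immediately to the left and right of $e$ lie on the two pseudolines $a$ and $b$ crossing $\ell$ at the left and right endpoints of $e$, and $c$ lies above $a$ and above $b$. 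A short local analysis at these two crossings shows that, inside the open half-plane above $\ell$, $a$ must leave its crossing heading ``up and to the left'' and $b$ ``up and to the right'' (these are precisely the directions needed for the edges of $a$ and $b$ to be \emph{lower} edges of $c$); equivalently, among the cells touching $\ell$, the one immediately left of $c$ lies below $a$ and the one immediately right of $c$ lies below $b$.

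The second step is a confinement statement. Claim: if $p$ leaves a cell $c$ that touches $\ell$ through a lower edge strictly to the right of $c$'s edge on $\ell$ -- say an edge on a pseudoline $w$ -- then every edge of $\ell$ that $p$ touches afterwards lies to the right of $c$'s edge along $\ell$; and symmetrically for ``left''. Indeed, $p$ crosses $w$ at this step and thereafter stays on the side of $w$ opposite to $c$; on the other hand, the cells touching $\ell$ to the left of (or equal to) $c$'s edge are joined to $c$ by a chain of arrangement-edges none of which meets $w$ (here simplicity is used: $w$ is concurrent with no two of $\ell$ and the intervening pseudolines, and passes through the interior of no edge), so all of them lie on $c$'s side of $w$ and $p$ never visits them again. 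Now suppose, for contradiction, that $\ell$ is a middle exit of two distinct visited cells, $c$ above an edge $e$ and $c'$ above an edge $e'$, with $p$ reaching $c$ first. Both are reached before $p$ crosses $\ell$, so $p$ leaves $c$ through a lower edge other than $e$, hence strictly to the left or right of $e$; say to the right. By the confinement statement, $e'$ lies to the right of $e$ along $\ell$. Feeding this into the structural step at $c$ and at $c'$ -- the right flank $b$ of $c$ heads up and to the right with $c$ above it, while the left flank of $c'$ heads up and to the left with $c'$ above it -- and iterating the confinement statement at $c'$ and at the cells touched by $p$ in between, one concludes that to pass from $c$ to $c'$ the path $p$ would have to cross back over $b$, or over the left flank of $c'$, or over $\ell$ itself, each of which contradicts that $p$ crosses every pseudoline exactly once.

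The step I expect to be the main obstacle is this last one: turning the ``flanks point outward'' picture together with the confinement statement into a watertight contradiction in full generality, since $e$ and $e'$ may be far apart along $\ell$ and $p$ may wander through many cells not touching $\ell$ between $c$ and $c'$. When $e$ and $e'$ are consecutive edges of $\ell$ the contradiction is immediate from the structural step alone -- the pseudoline at their shared endpoint would have to head both up-and-right and up-and-left -- and it is precisely the bookkeeping that reduces the general case to a local impossibility of this kind which constitutes the heart of the argument.
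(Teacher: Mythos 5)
The paper does not prove this lemma at all (it is quoted from Knuth), so the only question is whether your argument stands on its own. It does not quite: you explicitly leave the concluding step --- deriving a contradiction when the two middle-exit edges $e$ and $e'$ of $\ell$ are not consecutive --- as an acknowledged obstacle, proving it only in the adjacent case and otherwise gesturing at ``iterating the confinement statement'' over the intermediate cells. As written, that is a genuine gap: nothing in the proposal explains how the wandering of $p$ between $c$ and $c'$ gets reduced to the local ``flanks clash'' picture, and the confinement claim by itself only tells you that $e'$ lies to the right of $e$, which is where your argument stops being a proof.

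The frustrating part is that your own structural step already closes the gap, with no iteration and in fact with no need for the confinement claim at all. Suppose $c$ and $c'$ are both visited, both have $\ell$ as a middle exit, with edges $e\neq e'$, and say $e'$ lies to the right of $e$ (the other case is symmetric, using the right flank of $c'$). Apply the structural step at the \emph{later} cell $c'$: its left flank $a'$ crosses $\ell$ at the left endpoint $x_{a'}$ of $e'$ heading up and to the left, and since $a'$ meets $\ell$ only there, the entire portion of $\ell$ to the left of $x_{a'}$ --- in particular the interior of $e$ --- lies strictly below $a'$. The cell $c$ is incident to interior points of $e$ and disjoint from $a'$, so all of $c$ lies below $a'$, while $c'$ lies above $a'$ (it is bounded from below by $a'$). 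If $e$ and $e'$ are adjacent this is already absurd, since then $a'$ is the right flank of $c$ and $c$ would have to lie above it; if not, it contradicts your own preliminary monotonicity observation, because $p$ visits $c$ (below $a'$) before $c'$ (above $a'$) and every step of $p$ crosses a pseudoline from its north side to its south side, so $p$ can never return to the upper side of $a'$. With this two-line finish substituted for the final paragraph, your proposal becomes a complete and correct proof; as submitted, the heart of the argument is missing by your own admission.
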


If the cutpath crosses $k$ pseudolines as a middle exit and $u$ as a unique exit, one can thus ecode it by choosing the $k$ pseudolines the cutpath will cross as a middle exit, and fixing some binary string $\beta_p$ of length $n-k-u$. Any time a cell is reached, we check if it has a unique exit or if one of the middle exits is among the $k$ chosen, and we take that exit if this is the case. If it is not the case (which happens at most $n-k-u$ times), we consume one entry of $\beta_p$ to choose whether we take the left or right exit. This encoding shows that for any fixed $k$, the number of cutpaths taking $k$ middle exits and $u$ unique exits is at most $\binom{n}{k}2^{n-k-u}$.

This bound can be further refined by analyzing the repartition of middle exits among visited cells more carefully, showing that the number of cutpaths taking $k$ middle exits and $u$ unique exits is at most 
\[\binom{n-u}{k}\left(\frac{n}{n-u}\right)^k 2^{n-k-u}.\]

Here we have encoded the cutpath by describing its course from the north-cell to the south-cell, but one can also choose to encode it in the other direction, by rotating the whole plane by $180$ degrees and describing its course from the south-cell to the north-cell. The crucial observation made by Felsner and Valtr is that any middle exit in the original description corresponds to a unique exit in the rotated description (and similarly any middle exit in the rotated description corresponds to a unique exit in the original description). Felsner and Valtr exploit the refined bound, together with the freedom to choose in which direction to encode a cutpath, to show their best bound of roughly $\gamma_n \leq 4n\cdot 2.487^n$.

\paragraph*{Improving the upper bound.} In order to improve the upper bound, we will pay closer attention to the total number of middle exits among all cells visited by $p$. Let $\Gamma(m,k,u,m',k',u')$ denote the set of cutpaths of $A$ such that when viewed from the north-cell to the south-cell
\begin{itemize}
    \item the total number of middle exits taken is $k$,
    \item the total number of unique exits taken is $u$,
    \item the total number of middle exits among cells visited (including those which were not taken) is $m$,
\end{itemize}
and when viewed from the south-cell to the north-cell
\begin{itemize}
    \item the total number of middle exits taken is $k'$,
    \item the total number of unique exits taken is $u'$,
    \item the total number of middle exits among cells visited (including those which were not taken) is $m'$,
\end{itemize}

Stated more precisely, the bound shown by Felsner and Valtr is the following.
\begin{lemma}[{\cite{felsner2011coding}}]
    For all $m,k,u,m',k',u'$ such that $k \leq n-u \leq m$, we have
    \[|\Gamma(m,k,u,m',k',u')| \leq \binom{n-u}{k}\left(\frac{m}{n-u}\right)^k 2^{n-k-u}.\]
\end{lemma}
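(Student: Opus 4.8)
The plan is to sharpen the injective encoding of cutpaths that gives the cruder bound $\binom{n}{k}2^{n-k-u}$ recalled above, keeping track of how the $k$ middle exits taken by $p$ are spread among the cells $p$ visits. Fix the arrangement $A$, and consider $p\in\Gamma(m,k,u,m',k',u')$; I will encode $p$ so that it can be recovered by walking from the north cell. As $p$ crosses each of the $n$ pseudolines exactly once it visits $n+1$ cells: the last is the south cell, $u$ of the others have a single lower pseudoline (a unique exit), and the remaining $n-u$ cells $C_1,\dots,C_{n-u}$ (listed in the order $p$ reaches them) each have at least two lower pseudolines. Writing $a_i=d_i-2\ge0$ for the number of middle exits of $C_i$, the hypotheses say precisely that $\sum_{i=1}^{n-u}a_i=m$, that $p$ takes a middle exit at $k$ of the $C_i$, and that it takes a left or right exit at the other $n-k-u$ of them.

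First I would re-derive the crude bound in order to reuse its mechanism. Let $D$ be the set of the $k$ pseudolines that $p$ crosses as a middle exit. By Lemma~\ref{lemma:middles} each pseudoline is a middle exit of at most one visited cell, and since each pseudoline is crossed only once and those in $D$ are crossed as middle exits, whenever some pseudoline of $D$ is a middle exit of the current cell it is the exit $p$ actually takes there, and it is the only member of $D$ in that position. Hence $p$ is determined by $D$ together with a binary string $\beta_p$ of length $n-k-u$: walking from the north cell, take the unique exit at a cell that has one; at a cell with at least two lower pseudolines, take the pseudoline of $D$ that is a middle exit of it if one exists, and otherwise consume the next bit of $\beta_p$ to go left or right. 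This gives $|\Gamma(m,k,u,m',k',u')|\le\binom{n}{k}2^{n-k-u}$.

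The improvement comes from encoding $D$ through the cell structure instead of as an abstract set of pseudolines. By the observations above, each $\ell\in D$ is a middle exit of exactly one visited cell --- the cell from which $p$ crosses $\ell$ --- and since $p$ exits each cell only once, these cells are distinct and lie among $C_1,\dots,C_{n-u}$. So describing $D$ is the same as choosing $k$ of the cells $C_1,\dots,C_{n-u}$ and, for each chosen $C_i$, one of its $a_i$ middle exits; I would have the decoder refer to the cells by their rank in $\{1,\dots,n-u\}$ in the visiting order, which is recovered on the fly while walking. The number of such choices is the $k$-th elementary symmetric polynomial $e_k(a_1,\dots,a_{n-u})$, and Maclaurin's inequality (the elementary-symmetric-function strengthening of the arithmetic--geometric mean inequality), applied with $\sum_i a_i=m$ and $k\le n-u$, gives
\[
 e_k(a_1,\dots,a_{n-u})\ \le\ \binom{n-u}{k}\left(\frac{\sum_i a_i}{n-u}\right)^{\!k}\ =\ \binom{n-u}{k}\left(\frac{m}{n-u}\right)^{\!k};
\]
multiplying by the $2^{n-k-u}$ choices for $\beta_p$ yields the claimed bound, the condition $n-u\le m$ being what makes the averaged quantity at least $1$.

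The step I expect to be the main obstacle is making this refined encoding rigorously decodable and tight. The cells $C_i$ and the counts $a_i$ are not known in advance but only revealed as the walk proceeds, so one must check that the rule ``at the $i$-th cell with at least two lower pseudolines, if $i$ is one of the $k$ recorded ranks take the recorded middle exit, otherwise read a bit'' reconstructs $p$ exactly, and --- more delicately --- that the number of data records that can arise from cutpaths with these parameters does not exceed $e_k(a_1,\dots,a_{n-u})\,2^{n-k-u}$, so that Maclaurin's inequality can then be invoked. Once this bookkeeping is in place, the remainder is the one-line symmetric-function estimate displayed above.
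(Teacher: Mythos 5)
Your plan is, in spirit, the Felsner--Valtr argument itself; note that the paper does not reprove this lemma (it only rederives the crude bound $\binom{n}{k}2^{n-k-u}$, cites \cite{felsner2011coding}, and remarks in a footnote that their proof works with partitions of $[n]$ into $n-u$ subsets). Your rederivation of the crude bound via Lemma~\ref{lemma:middles} is correct. However, the step you flag as ``the main obstacle'' is a genuine gap, not bookkeeping. The vector $(a_1,\dots,a_{n-u})$ is an attribute of an individual cutpath, while the class $\Gamma(m,k,u,m',k',u')$ fixes only the sum $\sum_i a_i=m$; distinct cutpaths in the same class generally distribute their middle exits over their visited cells in different ways. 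Hence ``the number of records is at most $e_k(a_1,\dots,a_{n-u})\,2^{n-k-u}$'' is not the size of any fixed code set into which you have exhibited an injection. Trying to rescue it with the on-the-fly decoder does not work either: for a fixed rank set $R$ and bit string $\beta$, the cutpaths sharing this data are the leaves of a decision tree whose branching factors vary from branch to branch, so a Kraft-type count gives at most $\max_{p}\prod_{i\in R}a_i(p)$ leaves, and after summing over $R$ you cannot exchange the sum with the maximum; the best this route yields is of order $\binom{n-u}{k}\left(\frac{m}{k}\right)^{k}2^{n-k-u}$, which is strictly weaker than the claimed $\binom{n-u}{k}\left(\frac{m}{n-u}\right)^{k}2^{n-k-u}$ when $k<n-u$. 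Already for $k=1$ the lemma asserts $|\Gamma|\le m\,2^{n-1-u}$ even though, a priori, for each rank $i$ there could be cutpaths in the class concentrating all $m$ middle exits at their $i$-th multi-exit cell; your per-path Maclaurin estimate says nothing about how these different concentrations interact across the class.

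Closing this gap is exactly the substance of the cited proof: the distribution of middle exits over cells must be made part of the data being counted (Felsner and Valtr use partitions of $[n]$ into $n-u$ subsets, which is also where the hypothesis $n-u\le m$ enters; the paper's footnote observes that colorings with $n-u$ colors remove that hypothesis), and only then does a balancing/Maclaurin-type estimate bound a quantity that genuinely dominates $|\Gamma(m,k,u,m',k',u')|$. A further warning sign in your write-up: your argument never uses the hypothesis $n-u\le m$ (your aside that it makes the average at least $1$ plays no role in Maclaurin's inequality), yet the paper spends Lemma~\ref{lemma:cutpath_bound} on a separate computation precisely to remove that hypothesis --- if your encoding worked as written, that extension would be automatic. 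Your final symmetric-function inequality is fine; what is missing is the counting argument that reduces the cardinality of the class to a single expression of the form $e_k(a_1,\dots,a_{n-u})\,2^{n-k-u}$.
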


The condition $n-u \leq m$ is actually unnecessary here. Indeed, the encoding mentioned in the beginning of the section immediately implies a bound of $\binom{m}{k}2^{n-k-u}$, and when $n-u \geq m$ we have that
\begin{align*}
    \binom{m}{k}/\binom{n-u}{k} = \prod_{i=0}^{k-1} \frac{m-i}{n-u-i}
    \leq \prod_{i=0}^{k-1}\frac{m}{n-u} = \left(\frac{m}{n-u}\right)^k.
\end{align*}
Thus, if $n-u \geq m$, then $\binom{m}{k}2^{n-k-u} \leq \binom{n-u}{k}\left(\frac{m}{n-u}\right)^k 2^{n-k-u}$ and we can slightly extend the result to the following\footnote{Alternatively, the proof by Felsner and Valtr can be easily tweaked so that the condition $n-u \leq m$ becomes unnecessary, by replacing the partitions of $[n]$ into $n-u$ subsets they consider in their proof with colorings of $[n]$ using $n-u$ colors.}.

\begin{lemma}\label{lemma:cutpath_bound}
    For all $m,k,u,m',k',u'$ such that $k \leq m$ and $k \leq n-u$, we have
    \[|\Gamma(m,k,u,m',k',u')| \leq \binom{n-u}{k}\left(\frac{m}{n-u}\right)^k 2^{n-k-u}.\]
    By symmetry we also have that for $k' \leq m'$ and $k' \leq n-u'$,
    \[|\Gamma(m,k,u,m',k',u')| \leq \binom{n-u'}{k'}\left(\frac{m'}{n-u'}\right)^{k'} 2^{n-k'-u'}.\]
\end{lemma}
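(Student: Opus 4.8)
The plan is to reduce both inequalities to the elementary encoding bound stated at the start of the section, namely that for any fixed $k$, the number of cutpaths taking $k$ middle exits and $u$ unique exits is at most $\binom{m}{k}2^{n-k-u}$ (this bound already accounts for the constraint that middle exits are chosen among the $m$ cells that actually have a middle exit available, using Lemma~\ref{lemma:middles}). First I would invoke the result of Felsner and Valtr, which gives the claimed bound $\binom{n-u}{k}\left(\frac{m}{n-u}\right)^k 2^{n-k-u}$ whenever $k\le n-u\le m$. So the only thing left to establish is the case $k \le n-u$ but $m \le n-u$; in that regime I will show that the elementary bound $\binom{m}{k}2^{n-k-u}$ is already no larger than the claimed bound.

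The key step is the term-by-term comparison of binomial coefficients already sketched in the excerpt: when $m \le n-u$ and $i$ ranges over $0,1,\dots,k-1$, each factor $\frac{m-i}{n-u-i}$ in the product expansion of $\binom{m}{k}/\binom{n-u}{k}$ is at most $\frac{m}{n-u}$, since decreasing numerator and denominator by the same nonnegative amount $i$ cannot increase a ratio that is already $\le 1$. (One should note $\binom{m}{k}=0$ when $k>m$, so the inequality is trivial there; when $k\le m$ the product form is valid and all factors are well-defined and nonnegative.) Multiplying these $k$ inequalities gives $\binom{m}{k}/\binom{n-u}{k} \le \left(\frac{m}{n-u}\right)^k$, hence $\binom{m}{k}2^{n-k-u} \le \binom{n-u}{k}\left(\frac{m}{n-u}\right)^k 2^{n-k-u}$. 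Combined with the elementary encoding bound, this yields the first inequality in the full range $k\le m$, $k\le n-u$. The second inequality is then immediate by the symmetry between the north-to-south and south-to-north encodings of a cutpath, replacing $(m,k,u)$ by $(m',k',u')$.

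The main (and essentially only) obstacle is purely bookkeeping: making sure the elementary bound $\binom{m}{k}2^{n-k-u}$ is correctly justified with $m$ rather than $n$ as the relevant population — that is, that a middle exit can only be taken at one of the $m$ cells possessing a middle exit, which is exactly where Lemma~\ref{lemma:middles} and the definition of $m$ come in — and making sure the degenerate cases ($k=0$, $k>m$, $m=n-u$) are handled by the conventions on empty products and vanishing binomial coefficients. There is no hard analytic content; the footnote's alternative route (recoloring the Felsner–Valtr partition argument with $n-u$ colors instead of a partition into $n-u$ blocks) would give the same conclusion directly, but the comparison argument above is shorter given that their lemma is already available.
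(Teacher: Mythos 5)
Your proposal is correct and follows essentially the same route as the paper: invoke the Felsner--Valtr bound when $n-u \le m$, and in the remaining case $m \le n-u$ use the refined encoding bound $\binom{m}{k}2^{n-k-u}$ together with the termwise comparison $\binom{m}{k}/\binom{n-u}{k} = \prod_{i=0}^{k-1}\frac{m-i}{n-u-i} \le \left(\frac{m}{n-u}\right)^{k}$, obtaining the second inequality by the north--south symmetry. The paper's justification is exactly this argument (stated in the text preceding the lemma), so no further comparison is needed.
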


Note that the conditions $k\leq m$ and $k \leq n-u$ (and similarly for $k'\leq m'$ and $k' \leq n-u'$) always hold, as a cutpath cannot take more middle exits than there are such exits, and a taken exit cannot be both a middle and a unique exit simultaneously. The correspondence between middle exits and unique exits in the rotated arrangement implies $k\leq u'$ and $k'\leq u$. Lemma \ref{lemma:middles} implies $m\leq n$ and $m'\leq n$. 

In order to get a tighter grip on $m$ and $m'$, our simple idea is to complement the use of Lemma \ref{lemma:middles} to bound them separately with the use of (the tight version of) the Zone Theorem to bound them conjointly (one could obtain the same bound we obtain here without using Lemma \ref{lemma:middles} at all, but its use makes some of the arguments simpler). Call \emph{zone} of a pseudoline the set of cells supported by that pseudoline, and call \emph{complexity of the zone} the sum of the number of edges over all the cells in the zone (an edge may be counted twice if it appears in two different cells of the zone). Then the following holds.
\begin{theorem}[Zone Theorem {\cite{bern1990horizon,pinchasi2011zone}}]
    In an arrangement of $n+1$ pseudolines, the complexity of the zone of any given pseudoline is at most $9.5n-3$.
\end{theorem}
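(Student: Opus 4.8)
The plan is to prove the theorem by induction on $n$ (the number of pseudolines other than the distinguished one), by means of a charging argument on the edges of the zone. I would fix the distinguished pseudoline $\ell$ and draw it as a horizontal wire, so that the remaining $n$ pseudolines are partitioned by $\ell$ into an upper part and a lower part and the zone splits accordingly; it then suffices to bound the number of edges of zone cells lying weakly above $\ell$ (the part below $\ell$ is handled by the same argument, and the edges lying on $\ell$ itself contribute only $O(n)$ in total). For each of the other pseudolines $h$, I would orient $h$ and classify an edge of a zone cell lying on $h$ as a \emph{left edge} of $h$ if the incident zone cell lies to the left of $h$, and as a \emph{right edge} otherwise; keeping track of these two counts separately is what eventually allows one to reach the sharp constant rather than merely a bound of the form $O(n)$.

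For the inductive step, among the $n$ pseudolines crossing $\ell$ I would single out the one, say $g$, whose crossing point with $\ell$ is rightmost, delete it, and apply the induction hypothesis to the remaining arrangement of $n$ pseudolines (including $\ell$). Re-inserting $g$, the part of $g$ lying above $\ell$ is a curve emanating from the rightmost point of $\ell$ and passing through a sequence of cells of the old zone. The heart of the matter is a lemma stating that this re-insertion raises the zone complexity by only a constant: $g$ contributes its own new edges, and it may split an existing edge of another pseudoline into a zone edge plus a non-zone edge, but the extremality of the crossing of $g$ with $\ell$ forces the old-zone cells that $g$ traverses to be met in a monotone fashion, so that only boundedly many new left edges and right edges are produced. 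Summing the resulting recurrence gives a bound of the shape $cn + O(1)$, and a sufficiently careful version of it gives exactly $9.5n - 3$.

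The step I expect to be the genuine obstacle is pinning the constant down to the optimal value of $9.5$. The crude version of the argument above already yields $O(n)$, but the tight constant calls for a delicate amortized analysis: one must argue carefully about how the freshly inserted curve $g$ meets the first and the last old-zone cells it enters, set up the bookkeeping so that left edges and right edges are charged asymmetrically, and dispose separately of the degenerate configurations (for instance when $g$ bounds unbounded cells, or when only a few pseudolines cross above $\ell$). It is exactly this case analysis --- together with the observation that none of it uses the straightness of the curves, only that every subarrangement is again a simple pseudoline arrangement realizable as a wiring diagram --- that is carried out by Bern et al.\ and, in the pseudoline setting, by Pinchasi; reproducing it in full is beyond a short sketch, but the inductive skeleton above is its backbone.
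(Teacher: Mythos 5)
This statement is not proved in the paper at all: it is imported verbatim from the cited works of Bern, Eppstein, Plassmann and Yao (for lines) and Pinchasi (for the tight pseudoline version), and the paper only adapts its conclusion in Lemma~\ref{lemma:zone}. So there is no in-paper argument to compare against; the relevant question is whether your proposal would stand on its own as a proof of the stated bound, and it does not. Your sketch reproduces the well-known skeleton of the zone-theorem proofs --- induct by deleting the pseudoline whose crossing with the zone line is rightmost, split the zone along $\ell$, and count left-bounding and right-bounding edges separately --- but the statement to be proved is an exact bound, $9.5n-3$, and the one step that produces the constant $9.5$ is precisely the step you defer (``a sufficiently careful version of it gives exactly $9.5n-3$'', ``reproducing it in full is beyond a short sketch''). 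The naive version of your inductive step, in which re-inserting the rightmost curve raises the zone complexity by ``only a constant,'' yields a bound of the form $cn+O(1)$ with a worse constant; getting $9.5$ requires the intricate amortized charging of Bern et al.\ (and Pinchasi's argument in the pseudoline setting), which is the mathematical content of the theorem, not a detail.

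Two further points make the sketch, as written, incompatible with the precise statement. First, you propose to handle the edges lying on $\ell$ itself as an ``$O(n)$'' contribution, but a claimed bound of $9.5n-3$ has no room for unquantified lower-order slack: every such edge must be counted exactly (indeed, the paper's own Lemma~\ref{lemma:zone} has to account explicitly for the $2(n+1)$ edges on the pseudoline and the $2n$ doubly counted cut edges). Second, the pseudoline case is not merely a remark that ``straightness is never used'': Pinchasi's contribution was to establish the tight constant in this more general setting, and some of the line-specific arguments do not transfer verbatim. In short, the strategy you outline is the right one and matches the literature the paper cites, but as a proof of the stated inequality it has a genuine gap: the tight constant and the additive term are asserted, not derived.
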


In our setting (after discarding the $2(n+1)$ edges on the pseudoline itself as well as accounting for the $2n$ edges counted twice because they were ``cut'' by the pseudoline) this translates to the following.
\begin{lemma}\label{lemma:zone}
    The sum of the number of edges bounding a cell over all cells visited by a cutpath is at most $5.5n-5$.
\end{lemma}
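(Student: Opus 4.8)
The plan is to turn the cutpath into an actual pseudoline and apply the Zone Theorem to it. Given the arrangement $A$ of $n$ pseudolines and a cutpath $p$, we draw a new pseudoline $\ell$ following $p$ through the wiring diagram, routing it so that it passes through no crossing point of $A$; then $A' := A \cup \{\ell\}$ is a simple arrangement of $n+1$ pseudolines. Since a cutpath crosses each pseudoline of $A$ exactly once, $\ell$ is cut into $n+1$ sub-arcs, and the cells visited by $p$ are exactly the $n+1$ cells of $A$ whose interior meets $\ell$. I would first record two structural facts: each visited cell $c$ is split by $\ell$ into exactly two cells $c^+$ and $c^-$ of $A'$ (the intersection $\ell \cap c$ is a single arc, since a directed path in the dual DAG visits $c$ at most once), and the cells of $A'$ incident to $\ell$ — that is, the zone of $\ell$ in $A'$ — are precisely these $2(n+1)$ cells $c^{\pm}$.

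The heart of the argument is then an edge count, done cell by cell. Write $e_c$ for the number of edges bounding a cell $c$ of $A$, and let $E$ be the sum of $e_c$ over the cells visited by $p$; this is the quantity we must bound. If a visited cell $c$ is neither the north-cell nor the south-cell, then $\ell$ enters $c$ through one boundary edge and leaves through a second one (distinct from the first, since $\ell$ meets each pseudoline at most once); these two edges are each split in two, the remaining $e_c - 2$ edges pass undivided to $c^+$ or to $c^-$, and the arc $\ell \cap c$ contributes one edge to each side, so that $e_{c^+} + e_{c^-} = (e_c - 2) + 4 + 2 = e_c + 4$. For the north-cell and for the south-cell, $\ell$ runs off to infinity on one side and therefore crosses only one boundary edge, which gives $e_{c^+} + e_{c^-} = e_c + 3$ instead. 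Summing over the $n-1$ remaining visited cells and the two extremal ones, the complexity of the zone of $\ell$ equals $\sum_c (e_{c^+} + e_{c^-}) = E + 4(n-1) + 2 \cdot 3 = E + 4n + 2$.

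Applying the Zone Theorem to $\ell$ inside the $(n+1)$-pseudoline arrangement $A'$ now yields $E + 4n + 2 \le 9.5n - 3$, i.e.\ $E \le 5.5n - 5$, which is the claim. The step I expect to require the most care is precisely this per-cell bookkeeping: getting right the two unbounded extremal cells (where $\ell$ escapes to infinity rather than crossing a second edge), the convention for counting unbounded edges, and the fact that a cutpath visits exactly $n+1$ cells — it is this last point that makes the additive terms collapse to the stated constant. Everything else is routine.
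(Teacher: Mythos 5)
Your proof is correct and is essentially the argument the paper compresses into one sentence: insert the cutpath as a new pseudoline, apply the tight Zone Theorem to it in the resulting $(n+1)$-pseudoline arrangement, and account for the edges lying on the new pseudoline and the edges it cuts. Your per-cell bookkeeping $E+4(n-1)+2\cdot 3 = E+4n+2 \le 9.5n-3$ matches the paper's accounting of discarding the $2(n+1)$ edges on the pseudoline itself and the $2n$ doubly-counted cut edges.
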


A similar result (with an additional term due to a slightly different setting), is mentioned in Knuth's book~\cite[p.\ 97]{knuth1992axioms}.

Let us count this number of edges in terms of $m, m',u$ and $u'$. We start by counting only edges bounding cells from below. Every middle or unique exit contributes $1$ to this count. For all $n-u$ visited cells which have more than one exit, we have to add two edges corresponding to the left and right exits, totaling in $m+u+2(n-u) = 2n+m-u$ edges below cells visited by the cutpath. Repeating the same count for edges above the cells by a $180$ degree rotation of the plane gives $2n+m'-u'$, resulting in a total of $4n+m+m'-u-u'$. Lemma \ref{lemma:zone} thus implies $4n+m+m'-u-u' \leq 5.5n$, i.e.\ $m+m'-u-u' \leq 1.5n$.

We can now bound the total number of cutpaths as follows.
\begin{lemma}\label{lemma:rough_bound}
    For any $n$, we have $\gamma_n \leq n^6\max\{F_n(m,k,u,m',k',u')\}_{(m,k,u,m',k',u')\in U_n}$, where 
    \[F_n(m,k,u,m',k',u') := \min\left\{\binom{n-u}{k}\left(\frac{m}{n-u}\right)^k 2^{n-k-u}, \binom{n-u'}{k'}\left(\frac{m'}{n-u'}\right)^{k'} 2^{n-k'-u'}\right\}\]
    and
    \begin{align*}
        U_n := \{(m,k,u,m',k',u') \in \mathbb{N}^6 \mid &k+u \leq n, k \leq m, m\leq n, m'\leq n,
        k'+u' \leq n, k' \leq m',\\
        &k\leq u',\ k' \leq u,\\
        &m+m'-u-u' \leq 1.5n\}.
    \end{align*}
\end{lemma}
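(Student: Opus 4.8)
The plan is to fix an arrangement realizing $\gamma_n$, sort its cutpaths into the classes $\Gamma(m,k,u,m',k',u')$, check that only classes indexed by tuples in $U_n$ are nonempty, bound the size of each such class by $F_n$ via Lemma~\ref{lemma:cutpath_bound}, and finally observe that $U_n$ is a polynomially small set. Concretely, let $A$ be an arrangement of $n$ pseudolines achieving the maximum number $\gamma_n$ of cutpaths, represented as a wiring diagram. Each cutpath $p$ of $A$ determines the six-tuple $t=(m,k,u,m',k',u')$ read off from the north-to-south and south-to-north traversals, so the cutpaths of $A$ partition into the classes $\Gamma(t)$; since every coordinate is a nonnegative integer at most $n$, only finitely many classes are nonempty and $\gamma_n=\sum_t|\Gamma(t)|$.

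First I would show that $\Gamma(t)\neq\emptyset$ forces $t\in U_n$. A cutpath corresponds to inserting an $(n+1)$-st pseudoline, hence crosses each of the $n$ pseudolines exactly once and takes exactly $n$ exits; splitting those into left/right, unique and middle exits gives $k+u\le n$, and after the $180^\circ$ rotation $k'+u'\le n$. The inequalities $k\le m$ and $k'\le m'$ are immediate (one cannot take more middle exits than are visited), while $m\le n$ and $m'\le n$ are Lemma~\ref{lemma:middles} applied to the two traversals. The correspondence between middle exits in one traversal and unique exits in the other yields $k\le u'$ and $k'\le u$. Finally, as computed in the text just before Lemma~\ref{lemma:zone}, the total number of edges over all cells visited by $p$ equals $(2n+m-u)+(2n+m'-u')=4n+m+m'-u-u'$, which by Lemma~\ref{lemma:zone} is at most $5.5n-5<5.5n$, so $m+m'-u-u'\le 1.5n$. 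Thus $t\in U_n$, and consequently $\gamma_n\le\sum_{t\in U_n}|\Gamma(t)|$.

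Next I would bound each surviving term. For $t\in U_n$ we have $k\le m$ and $k\le n-u$ (the latter from $k+u\le n$), so the first estimate of Lemma~\ref{lemma:cutpath_bound} applies; likewise $k'\le m'$ and $k'\le n-u'$, so the second applies; hence $|\Gamma(t)|\le F_n(t)$. (When $u=n$, forcing $k=0$, the factor $(m/(n-u))^k$ is only formally defined, and one reads the first bound of Lemma~\ref{lemma:cutpath_bound} as the direct bound $\binom{m}{k}2^{n-k-u}=1$ from the start of the section; symmetrically for $u'=n$, so $F_n(t)$ is to be understood with this convention.) Since $U_n$ is finite and nonempty (e.g.\ $(0,\dots,0)\in U_n$), $\max\{F_n(t)\}_{t\in U_n}$ exists, and an elementary count — the constraints $0\le k\le m\le n$, $0\le k'\le m'\le n$, $k+u\le n$ and $k'+u'\le n$ already suffice — gives $|U_n|\le n^6$. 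Putting these together, $\gamma_n\le\sum_{t\in U_n}|\Gamma(t)|\le|U_n|\cdot\max_{t\in U_n}|\Gamma(t)|\le n^6\max\{F_n(t)\}_{t\in U_n}$, which is the claimed bound.

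There is no genuinely hard step here: the content lives entirely in the three quoted ingredients — Knuth's middle-exit lemma (Lemma~\ref{lemma:middles}), the Felsner--Valtr encoding packaged as Lemma~\ref{lemma:cutpath_bound}, and the Zone Theorem bound (Lemma~\ref{lemma:zone}). The points that need a little care, and which I expect to be the most error-prone, are (i) keeping straight which crossings of $p$ count as left/right, unique or middle exits when viewed from each of the two ends, so that both $4n+m+m'-u-u'$ and the inequalities $k\le u'$, $k'\le u$ come out right, and (ii) the degenerate cases $u=n$ or $u'=n$, where the displayed formula for $F_n$ would divide by zero and must be read via the direct encoding bound. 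Both are routine, so the actual write-up should be short.
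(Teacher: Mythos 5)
Your proposal is correct and follows essentially the same route as the paper: the paper's proof is exactly this — nonempty classes $\Gamma(t)$ force $t\in U_n$ by the discussion preceding the lemma (including the zone-complexity count), each class is bounded by $F_n(t)$ via Lemma~\ref{lemma:cutpath_bound}, and $|U_n|\le n^6$ — with your write-up merely spelling out what the paper summarizes as ``our previous discussions'' and adding the harmless convention for the degenerate case $u=n$.
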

\begin{proof}
    By our previous discussions, for all $(m,k,u,m',k',u') \not\in U_n$, $\Gamma(m,k,u,m',k',u') = \emptyset$. It is also immediate that $|U_n| \leq n^6$.
    By Lemma \ref{lemma:cutpath_bound} we have 
    \begin{align*}
        \gamma_n &\leq \sum_{(m,k,u,m',k',u')\in U_n}|\Gamma(m,k,u,m',k',u')| \\
        &\leq \sum_{(m,k,u,m',k',u')\in U_n}F_n(m,k,u,m',k',u')
         \\
        &\leq n^6\max\{F_n(m,k,u,m',k',u') \}_{(m,k,u,m',k',u')\in U_n}. \qedhere
    \end{align*}
\end{proof}

\paragraph*{Bounding $F_n$.}
To obtain a concrete upper bound on $\gamma_n$ using Lemma \ref{lemma:rough_bound}, it remains to bound $F_n(m,k,u,m',k',u')$ over $U_n$. We will do this by bounding $f_n(m,k,u,m',k',u') := \frac{1}{n}\lg(F_n(m,k,u,m',k',u'))$ over $U_n$. 

By heuristic optimization methods, we found that the maximum value seems to be around $1.299$, which one can obtain by setting $m=m'=0.96n$ and $u=k=u'=k'=0.21n$ and taking the limit as $n$ goes to infinity.

We will show that this is indeed close to the truth and that over this domain we can bound the function by $f_n(m,k,u,m',k',u') < 1.2992$.
For any maximization problem $R$, let $\textrm{MAX}(R)$ denote the optimal value for the problem (if the problem has no feasible solution, then let $\textrm{MAX}(R) = -\infty$).
Let $\h$ be the binary entropy function defined over $0\leq x \leq 1$ by \[\h(x)=-x\lg x-(1-x)\lg(1-x),\] with the convention that $0\cdot \lg 0 = 0$.
The following classical bound (see e.g.~\cite[p.~353]{thomas2006elements}) will be useful.
\begin{lemma}
    For all $0\leq q \leq r$, we have $\binom{r}{q} \leq 2^{r\cdot \h(q/r)}$.
\end{lemma}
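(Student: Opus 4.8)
The plan is to prove this standard fact via the classical binomial-theorem argument, instantiated at the ``entropy-optimal'' weight $p = q/r$. Throughout, $q$ and $r$ are non-negative integers with $q \le r$.

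First I would dispose of the degenerate cases $q = 0$ and $q = r$ (which between them also cover $r = 0$). In each of these $\binom{r}{q} = 1$, while $\h(q/r)$ equals $\h(0)$ or $\h(1)$, both of which are $0$ under the stated convention $0 \cdot \lg 0 = 0$; so the claimed inequality reads $1 \le 2^{0} = 1$ and holds. This reduces the problem to the case $1 \le q \le r-1$, where $p := q/r$ satisfies $0 < p < 1$ strictly.

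Next I would apply the binomial theorem to $\bigl(p + (1-p)\bigr)^{r} = 1$. Since all terms $\binom{r}{i} p^{i}(1-p)^{r-i}$ are non-negative, discarding all but the $i = q$ term gives
\[
\binom{r}{q}\, p^{q}(1-p)^{r-q} \;\le\; \sum_{i=0}^{r}\binom{r}{i} p^{i}(1-p)^{r-i} \;=\; 1 .
\]
Because $0 < p < 1$, the coefficient $p^{q}(1-p)^{r-q}$ is strictly positive, so I may divide through, take $\lg$ of both sides, and substitute $p = q/r$ and $1-p = (r-q)/r$:
\[
\lg\binom{r}{q} \;\le\; -q\lg\frac{q}{r} - (r-q)\lg\frac{r-q}{r} \;=\; r\left(-\frac{q}{r}\lg\frac{q}{r} - \frac{r-q}{r}\lg\frac{r-q}{r}\right) \;=\; r\cdot\h\!\left(\frac{q}{r}\right).
\]
Exponentiating base $2$ yields $\binom{r}{q} \le 2^{r\cdot\h(q/r)}$.

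This proof is entirely routine, so there is no real ``main obstacle''; the only points deserving attention are (i) treating $q \in \{0, r\}$ separately at the start, so that the convention $0\lg 0 = 0$ is invoked rather than dividing by zero later, and (ii) noting that in the main case $p$ lies \emph{strictly} inside $(0,1)$, which is precisely what makes the division step legitimate. One could additionally remark, though it is not needed for the bound, that $p = q/r$ is exactly the weight minimizing $p^{-q}(1-p)^{-(r-q)}$, which is why this instantiation produces the tight entropy estimate rather than a weaker one.
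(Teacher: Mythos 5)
Your proof is correct and is precisely the standard argument (drop all but the $i=q$ term in the expansion of $(p+(1-p))^r$ with $p=q/r$) that the paper implicitly relies on by citing this as a classical bound from Cover and Thomas; the paper gives no proof of its own. The careful separate treatment of $q\in\{0,r\}$ is a fine touch and nothing is missing.
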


Using this bound, and setting $\mu = m/n$, $\mu' = m'/n$, $\kappa = k/n$, $\kappa' = k'/n$, $\upsilon = u/n$ and $\upsilon' = u'/n$ we have that the maximum of $f_n(m,k,u,m',k',u')$ over $U_n$ is upper bounded by the optimum value of the following maximization problem\footnote{Note that if we ignore the very last constraint in $P$ (stemming from the Zone Theorem), the maximum of the resulting problem would yield a value corresponding to the bound of Felsner and Valtr.}, denoted by $P$.
\begin{align*}
    P:\quad\quad &\\
    \max \quad &X\\
    \textrm{s.t.} \quad & X \leq (1-\upsilon)\h\left(\frac{\kappa}{1-\upsilon}\right) + \kappa\lg\left(\frac{\mu}{1-\upsilon}\right) + 1-\kappa -\upsilon, \\
    &X \leq (1-\upsilon')\h\left(\frac{\kappa'}{1-\upsilon'}\right) + \kappa'\lg\left(\frac{\mu'}{1-\upsilon'}\right) + 1-\kappa' -\upsilon', \\
    & 0 \leq \mu \leq 1, \hspace{19.7mm} 0 \leq \mu' \leq 1, \\
    & 0 \leq \kappa \leq \mu, \hspace{19.5mm} 0 \leq \kappa' \leq \mu', \\
    & 0 \leq \kappa+\upsilon \leq 1, \hspace{13.5mm} 0 \leq \kappa'+\upsilon' \leq 1, \\
    & \kappa \leq \upsilon',\hspace{25mm} \kappa' \leq \upsilon\\
    & \mu+\mu'-\upsilon-\upsilon' \leq 1.5.
\end{align*}

Due to the concavity of $\h$, we have that  for all $0\leq x\leq 1$ and all $0<c<1$, $\h(x) \leq \h'(c)(x-c) +\h(c)$. We can thus linearize any term of the form $a\h(\frac{b}{a})$ by replacing it with a new variable $y$ and adding new constraints of the form $y \leq \h'(c)b +(\h(c)-c\h'(c))a$ for different choices of constants $0<c<1$. This can only increase the optimum value. Here we choose the values $c= 0.01, 0.02, \ldots 0.99$.

It remains to deal with the terms $\lg\left(\frac{\mu}{1-\upsilon}\right)$ and $\lg\left(\frac{\mu'}{1-\upsilon'}\right)$.
To do so, we rewrite the problem in an equivalent manner by introducing new variables $\ell$ and $\ell'$, replacing the two previous terms by $\lg \ell$ and $\lg \ell'$ respectively, then adding the constraints $\ell(1-\upsilon) = \mu$ and $\ell'(1-\upsilon') = \mu'$.

Note that if $\upsilon \geq \frac{3}{4}$, the first constraint of $P$ together with the constraint $\kappa+\upsilon \leq 1$ and the fact that for all $x > 0$,  $x\lg x \geq \frac{-1}{e\ln 2} \geq -0.54$ and for all $0\leq x\leq 1$, $h_2(x)\leq 1$ ensure that
\begin{align*}
    X &\leq \frac{1}{4} + \kappa(\lg \mu-\lg(1-\upsilon))+1-\kappa-\frac{3}{4} \\
    &\leq \frac{1}{2} - (1-\upsilon)\lg(1-\upsilon) 
    \leq 1.04.
\end{align*}
Thus, for the purpose of showing that $ \textrm{MAX}(P) \leq 1.2992$ we can safely assume $\frac{\mu}{1-\upsilon} \leq 4$ (which is implied by $\upsilon \leq \frac{3}{4}$ and $\mu \leq 1$), or in other words $\ell \leq 4$. Symmetrically, we can also assume $\ell' \leq 4$. For reasons made evident in a moment, we actually parametrize the problem by bounds we impose on $\ell$ and $\ell'$, resulting in the following optimization problem.
\begin{align*}
    P'(\ell_-,\ell_+,\ell'_-,\ell'_+)&:\\
    \max \quad &X\\
    \textrm{s.t.} \quad & X \leq y + \kappa\lg \ell + 1-\kappa -\upsilon, \\
    &X \leq y' + \kappa'\lg \ell' + 1 - \kappa' -\upsilon', \\
    & \forall c \in \{0.01,0.02,\ldots, 0.99\}: y \leq \h(c)\kappa + (\h(c)-c\h'(c))(1-\upsilon), \\
    & \forall c \in \{0.01,0.02,\ldots, 0.99\}: y' \leq \h(c)\kappa' + (\h(c)-c\h'(c))(1-\upsilon'), \\
    & 0 \leq \mu \leq 1, \hspace{19.7mm} 0 \leq \mu' \leq 1, \\
    & 0 \leq \kappa \leq \mu, \hspace{19.5mm} 0 \leq \kappa' \leq \mu', \\
    & 0 \leq \kappa+\upsilon \leq 1, \hspace{13.5mm} 0 \leq \kappa'+\upsilon' \leq 1, \\
    & \kappa \leq \upsilon',\hspace{25mm} \kappa' \leq \upsilon\\
    & \mu+\mu'-\upsilon-\upsilon' \leq 1.5,\\
    &\ell(1-\upsilon) = \mu, \hspace{15.8mm} \ell'(1-\upsilon') = \mu',\\
    &\ell_- \leq \ell \leq \ell_+, \hspace{16.3mm} \ell'_- \leq \ell' \leq \ell'_+.
\end{align*}

From our previous discussion, we know that $\textrm{MAX}(P'(0,4,0,4)) < 1.2992$ implies that $\textrm{MAX}(P) < 1.2992$. We will use a standard branch and bound method based on branching around values of $\ell$ and $\ell'$. In order to do so we introduce one last optimization problem $Q(\ell_-,\ell_+,\ell'_-,\ell'_+)$ obtained from $P'(\ell_-,\ell_+,\ell'_-,\ell'_+)$ by replacing $\lg \ell $ with the constant $\lg \ell_+ $ as well as replacing the constraint $\ell(1-\upsilon) = \mu$ with the two constraints $\ell_-(1-\upsilon) \leq \mu$ and $\ell_+(1-\upsilon) \geq \mu$ (and similar modifications for $\ell'$). We have that $\textrm{MAX}(P'(\ell_-,\ell_+,\ell'_-,\ell'_+)) \leq \textrm{MAX}(Q(\ell_-,\ell_+,\ell'_-,\ell'_+))$. Moreover, $Q(\ell_-,\ell_+,\ell'_-,\ell'_+)$ is a linear program.

\newpage
Consider the following procedure. 
\begin{center}
    \begin{minipage}{0.75\textwidth}
        \centering{$\textrm{Bound}(\ell_-,\ell_+,\ell'_-,\ell'_+):$}
        \begin{itemize}
        \item Solve $Q(\ell_-,\ell_+,\ell'_-,\ell'_+)$ and let $\Tilde{Q} = \textrm{MAX}(Q(\ell_-,\ell_+,\ell'_-,\ell'_+))$ be the optimum returned value.
        \item If $\Tilde{Q} \geq 1.2992$, let $\ell_\bullet = (\ell_- + \ell_+)/2$, $\ell'_\bullet = (\ell'_- + \ell'_+)/2$ and make the four following recursive calls:
        \begin{itemize}
            \item $\textrm{Bound}(\ell_-,\ell_\bullet,\ell'_-,\ell'_\bullet)$,
            \item $\textrm{Bound}(\ell_-,\ell_\bullet,\ell'_\bullet,\ell'_+)$,
            \item $\textrm{Bound}(\ell_\bullet,\ell_+,\ell'_-,\ell'_\bullet)$,
            \item $\textrm{Bound}(\ell_\bullet,\ell_+,\ell'_\bullet,\ell'_+)$.
        \end{itemize}
        \item Return ``Success''.
    \end{itemize}
    \end{minipage}
\end{center}

\begin{lemma}
    The successful termination of $\textrm{Bound}(0,4,0,4)$ implies that $\textrm{MAX}(P) < 1.2992$.
\end{lemma}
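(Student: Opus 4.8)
\subparagraph*{Proof proposal.}

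The plan is to turn the control flow of the procedure into a structural induction on its recursion tree. First I would recall the two facts already established in the text: for every box one has $\textrm{MAX}(P'(\ell_-,\ell_+,\ell'_-,\ell'_+)) \leq \textrm{MAX}(Q(\ell_-,\ell_+,\ell'_-,\ell'_+))$, and $\textrm{MAX}(P'(0,4,0,4)) < 1.2992$ implies $\textrm{MAX}(P) < 1.2992$. Hence it suffices to show that a successful termination of $\textrm{Bound}(0,4,0,4)$ forces $\textrm{MAX}(P'(0,4,0,4)) < 1.2992$.

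The one new ingredient I would isolate is a box-decomposition identity for $P'$: with $\ell_\bullet = (\ell_-+\ell_+)/2$ and $\ell'_\bullet = (\ell'_-+\ell'_+)/2$, the value $\textrm{MAX}(P'(\ell_-,\ell_+,\ell'_-,\ell'_+))$ equals the maximum of $\textrm{MAX}(P'(\cdot))$ over the four sub-boxes appearing in the recursive calls of $\textrm{Bound}$. To see this, note that $P'(\ell_-,\ell_+,\ell'_-,\ell'_+)$ and its four children coincide in every constraint except the last line, $\ell_-\leq\ell\leq\ell_+$ and $\ell'_-\leq\ell'\leq\ell'_+$. Since $[\ell_-,\ell_+]=[\ell_-,\ell_\bullet]\cup[\ell_\bullet,\ell_+]$ and likewise for the primed interval, the box $[\ell_-,\ell_+]\times[\ell'_-,\ell'_+]$ is the union of the four sub-boxes, so the feasible set of the parent problem is exactly the union of the feasible sets of the four children (the overlap along $\ell=\ell_\bullet$ or $\ell'=\ell'_\bullet$ is harmless for a maximum). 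Taking maxima over this union, and using the convention that an infeasible problem has value $-\infty$, gives the claimed identity.

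With this in hand, assume $\textrm{Bound}(0,4,0,4)$ terminates successfully and let $T$ be the resulting (finite) recursion tree, each of whose nodes carries the box passed to the corresponding call. I would prove by induction from the leaves of $T$ upward that $\textrm{MAX}(P'(\cdot)) < 1.2992$ for the box at every node. For a leaf, the call did not branch, so its value $\Tilde{Q}=\textrm{MAX}(Q(\cdot))$ satisfies $\Tilde{Q} < 1.2992$; combined with $\textrm{MAX}(P'(\cdot)) \leq \textrm{MAX}(Q(\cdot))$, this gives $\textrm{MAX}(P'(\cdot)) < 1.2992$. For an internal node, the procedure makes exactly the four recursive calls listed, the induction hypothesis bounds $\textrm{MAX}(P'(\cdot))$ below $1.2992$ for each of the four child boxes, and the box-decomposition identity transfers the bound to the parent. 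Applying this at the root gives $\textrm{MAX}(P'(0,4,0,4)) < 1.2992$, and therefore $\textrm{MAX}(P) < 1.2992$.

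The argument is essentially bookkeeping, and I expect no genuine obstacle. The two points that deserve a careful sentence are: (i) the feasible-region union in the box-decomposition identity must be stated with the closed sub-intervals (so that their union really is $[\ell_-,\ell_+]$) and with the $-\infty$ convention, so that it holds verbatim when some or all children are infeasible; and (ii) one should observe that every invocation of $\textrm{Bound}$ returns ``Success'', so ``successful termination of $\textrm{Bound}(0,4,0,4)$'' is precisely the assertion that the recursion tree $T$ is finite, which is exactly what licenses the leaf-to-root induction. Whether $\textrm{Bound}(0,4,0,4)$ does in fact terminate (and thus certifies the bound) is a separate, computational matter not addressed by this lemma.
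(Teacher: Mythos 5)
Your proposal is correct and matches the paper's own argument: both reduce the claim to showing $\textrm{MAX}(P'(0,4,0,4)) < 1.2992$ and prove by induction over the (finite) recursion tree that each call's box satisfies $\textrm{MAX}(P'(\cdot)) < 1.2992$, using $\textrm{MAX}(P') \leq \textrm{MAX}(Q)$ at non-branching calls and the four-sub-box decomposition at branching calls. Your explicit justification of the box-decomposition identity (union of feasible sets with the $-\infty$ convention) is simply a spelled-out version of the identity the paper states without further comment.
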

\begin{proof}
    We prove by induction on the recursion depth that the successful termination of $\textrm{Bound}(\ell_-,\ell_+,\ell'_-,\ell'_+)$ implies $\textrm{MAX}(P'(\ell_-,\ell_+,\ell'_-,\ell'_+)) <1.2992$. 
    
    If the recursion depth is $0$ (i.e.\ no recursive calls were made) then we immediately have $\textrm{MAX}(P'(\ell_-,\ell_+,\ell'_-,\ell'_+)) \leq \textrm{MAX}(Q(\ell_-,\ell_+,\ell'_-,\ell'_+)) < 1.2992$.
    Otherwise, assume the recursion depth is $d>0$. We have
    \begin{align*}
        \textrm{MAX}(P'(\ell_-,\ell_+,\ell'_-,\ell'_+)) &= 
        \max\{\textrm{MAX}(P'(\ell_-,\ell'_\bullet,\ell'_-,\ell'_\bullet)),
        \textrm{MAX}(P'(\ell_-,\ell'_\bullet,\ell'_\bullet,\ell'_+)),\\
        & \phantom{=\max\{}\textrm{MAX}(P'(\ell'_\bullet,\ell_+,\ell'_-,\ell'_\bullet)),
        \textrm{MAX}(P'(\ell'_\bullet,\ell_+,\ell'_\bullet,\ell'_+))\}.
    \end{align*}
    
    By the inductive hypothesis, the successful termination of the recursive calls (which all have recursion depth at most $d-1$) shows that each term on the r.h.s.\ is less than $1.2992$.
   
    It follows that 
    $
        \textrm{MAX}(P'(\ell_-,\ell_+,\ell'_-,\ell'_+)) < 1.2992.
    $
    
    As the successful termination of $\textrm{Bound}(\ell_-,\ell_+,\ell'_-,\ell'_+)$ implies finite recursion depth, the claim follows by induction.
    In particular, the successful termination of $\textrm{Bound}(0,4,0,4)$ implies $\textrm{MAX}(P'(0,4,0,4)) < 1.2992$, which in turn implies $\textrm{MAX}(P) < 1.2992$.
\end{proof}

We have checked that $\textrm{Bound}(0,4,0,4)$ does indeed successfully terminate. Note that to account for floating point errors we have replaced the value $1.2992$ with $1.29915$ in the actual implementation (this leads to a recursion depth of $15$ and roughly $30\, 000$ recursive calls in total). By Lemma \ref{lemma:rough_bound} we then have that $\gamma_n \leq n^6 \cdot 2^{\alpha\cdot n}$ where $\alpha< 1.2992$ (as mentioned earlier, this is close to optimal, which can be seen by taking $\mu = \mu' = 0.96$ and $\kappa=\kappa'=\nu=\nu' = 0.21$). This together with $B_{n+1} \leq \gamma_n B_n$ yields our final bound.
\begin{theorem}
    For large enough $n$, the number $B_n$ of simple arrangements of $n$ pseudolines is at most $2^{0.6496n^2}$.
\end{theorem}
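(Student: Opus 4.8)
The plan is to feed the bound $\gamma_n \leq n^6 2^{\alpha n}$ with $\alpha < 1.2992$ (which follows from Lemma~\ref{lemma:rough_bound} together with the successful termination of $\textrm{Bound}(0,4,0,4)$) into the recursion $B_{n+1} \leq \gamma_n B_n$ and telescope. Fixing any base index $n_0 \geq 1$, for every $n > n_0$ we have
\[
B_n \;=\; B_{n_0}\prod_{j=n_0}^{n-1}\frac{B_{j+1}}{B_j} \;\leq\; B_{n_0}\prod_{j=n_0}^{n-1}\gamma_j \;\leq\; B_{n_0}\left(\prod_{j=n_0}^{n-1} j^6\right) 2^{\alpha\sum_{j=n_0}^{n-1} j}.
\]
Taking $\lg$ of both sides turns the right-hand side into a sum, so the whole argument reduces to estimating three pieces: the constant $\lg B_{n_0}$, the product of sixth powers, and the exponential term.

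For the exponential term I would use $\sum_{j=n_0}^{n-1} j \leq \sum_{j=1}^{n-1} j = \binom{n}{2} < n^2/2$, giving $2^{\alpha\sum_j j} \leq 2^{(\alpha/2)n^2}$. Since $\alpha < 1.2992$ we have $\alpha/2 < 0.6496$, so we may set $\delta := 0.6496 - \alpha/2 > 0$; this fixed positive slack is what the lower-order terms will be absorbed into. For the polynomial factor, $\prod_{j=n_0}^{n-1} j^6 \leq (n!)^6 \leq n^{6n}$, whence $\lg\left(\prod_{j=n_0}^{n-1} j^6\right) \leq 6n\lg n$. Combining the three estimates,
\[
\lg B_n \;\leq\; \frac{\alpha}{2}\,n^2 + 6n\lg n + \lg B_{n_0} \;=\; (0.6496 - \delta)\,n^2 + 6n\lg n + \lg B_{n_0}.
\]

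Finally, since $6n\lg n + \lg B_{n_0} = o(n^2)$ while $\delta n^2$ is quadratic in $n$, there is an $n_1$ such that $6n\lg n + \lg B_{n_0} \leq \delta n^2$ for all $n \geq n_1$, and for such $n$ we obtain $\lg B_n \leq 0.6496\, n^2$, as claimed. I do not expect any serious obstacle in this step: the only mild point of care is that the $n^6$ factors accumulated over the telescoping contribute an $n^{6n}$ term to the bound, but this is still subquadratic in the exponent and hence harmless; one can even dispense with the unspecified constant $B_{n_0}$ altogether by starting the induction at $n_0 = 1$, where $B_1 = 1$. If an explicit threshold were wanted in place of ``large enough $n$'', it would suffice to solve $6n\lg n \leq \delta n^2$ for $n$, which is routine.
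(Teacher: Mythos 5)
Your proposal is correct and follows essentially the same route as the paper: telescoping $B_{n+1}\leq \gamma_n B_n$ with the bound $\gamma_n\leq n^6 2^{\alpha n}$, summing the arithmetic series to get the exponent $\frac{\alpha}{2}n^2 < 0.6496\,n^2$, and absorbing the polynomial factors and the constant $\lg B_{n_0}$ into the slack $0.6496-\alpha/2>0$ for large $n$. The only cosmetic difference is that the paper absorbs the $n^6$ factors into the exponent by choosing $n_0$ up front, whereas you carry them along explicitly as $n^{6n}$; both are fine.
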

\begin{proof}
    Let $n_0$ be an integer such that for all $n\geq n_0$, $\gamma_n \leq 2^{\alpha\cdot n}$, where $\alpha< 1.2992$. 
    For any $n\geq n_0$, we thus have  $B_{n+1} \leq 2^{\alpha\cdot n} B_n$. By induction, it follows that \[B_{n+1} \leq 2^{\alpha\cdot (n_0+(n_0+1)+\ldots n)} B_{n_0} \leq 2^{\alpha\frac{(n-n_0+1)(n_0+n)}{2}}B_{n_0}.\]
    As $\alpha< 1.2992$ this last expression is at most $2^{0.6496n^2}$ for large enough $n$.
\end{proof}

\newpage
\section{Construction of pseudoline arrangements with many cutpaths}\label{section:many_cutpaths}

\begin{figure}[h]
    \centering
    \includegraphics[width=0.75\linewidth]{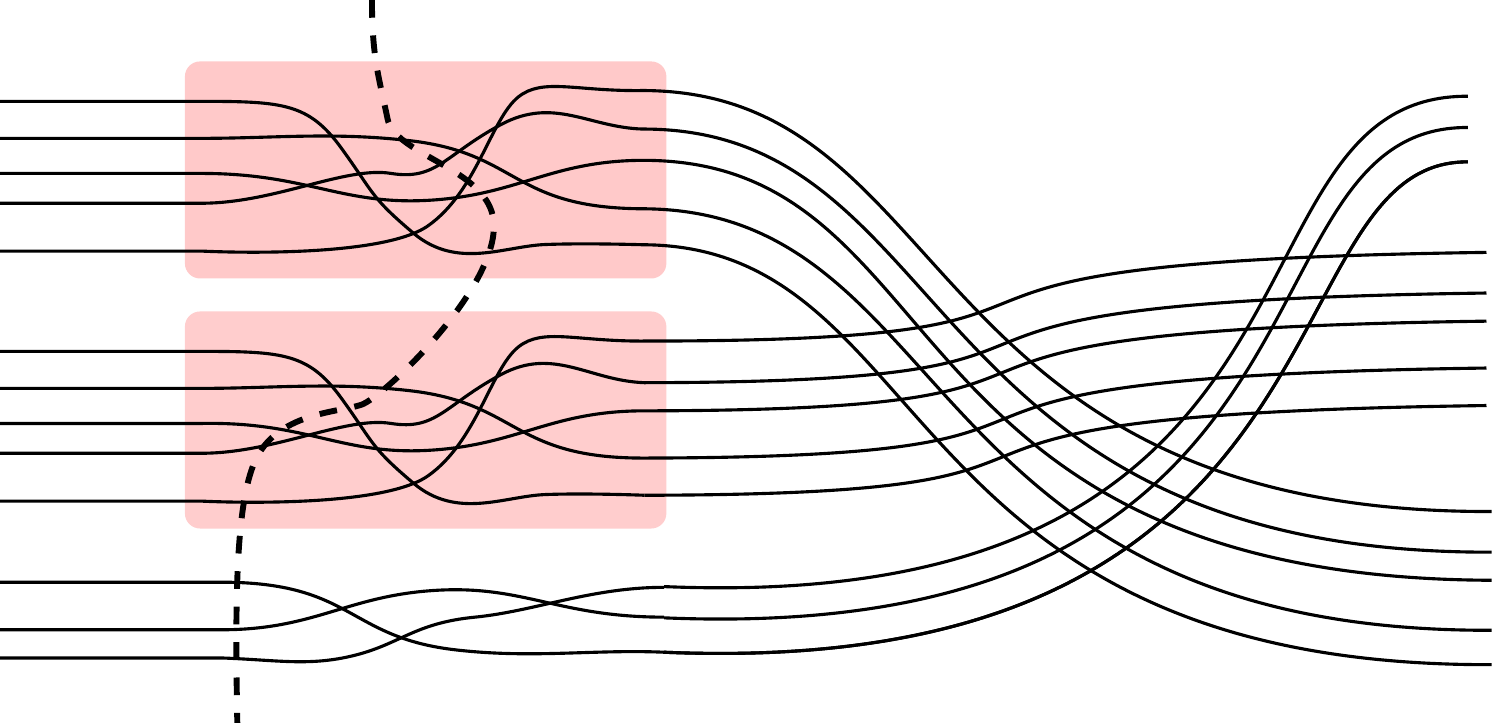}
    \caption{Arrangement of $13$ pseudolines obtained by stacking two copies of the same arrangement of $5$ pseudolines together with $3$ additional pseudolines. The dotted line represents a cutpath.}
    \label{fig:stack}
\end{figure}
Here we briefly describe a way to construct, for any $n>0$, an arrangement of $n$ pseudolines having at least $\Omega(2.083^n)$ cutpaths. Note that it is enough for this purpose to construct a single arrangement of $n_0$ pseudolines with at least $2.083^{n_0}$ cutpaths. This is because for any $n>n_0$, we can stack $\floor{n/n_0}$ disjoint copies of this arrangement on top of each other, add an additional arbitrary arrangement on $n - n_0 \floor{n/n_0}$ pseudolines and then insert the missing crossings between pseudolines belonging to different copies in an arbitrary manner (as illustrated in Figure~\ref{fig:stack}). This results in an arrangement of $n$ pseudolines with at least $(2.083^{\floor{n/n_0}})^{n_0} \geq 2.083^{n-n_0}$ cutpaths.

\begin{figure}[h]
    \centering
    \includegraphics[width=0.8\linewidth]{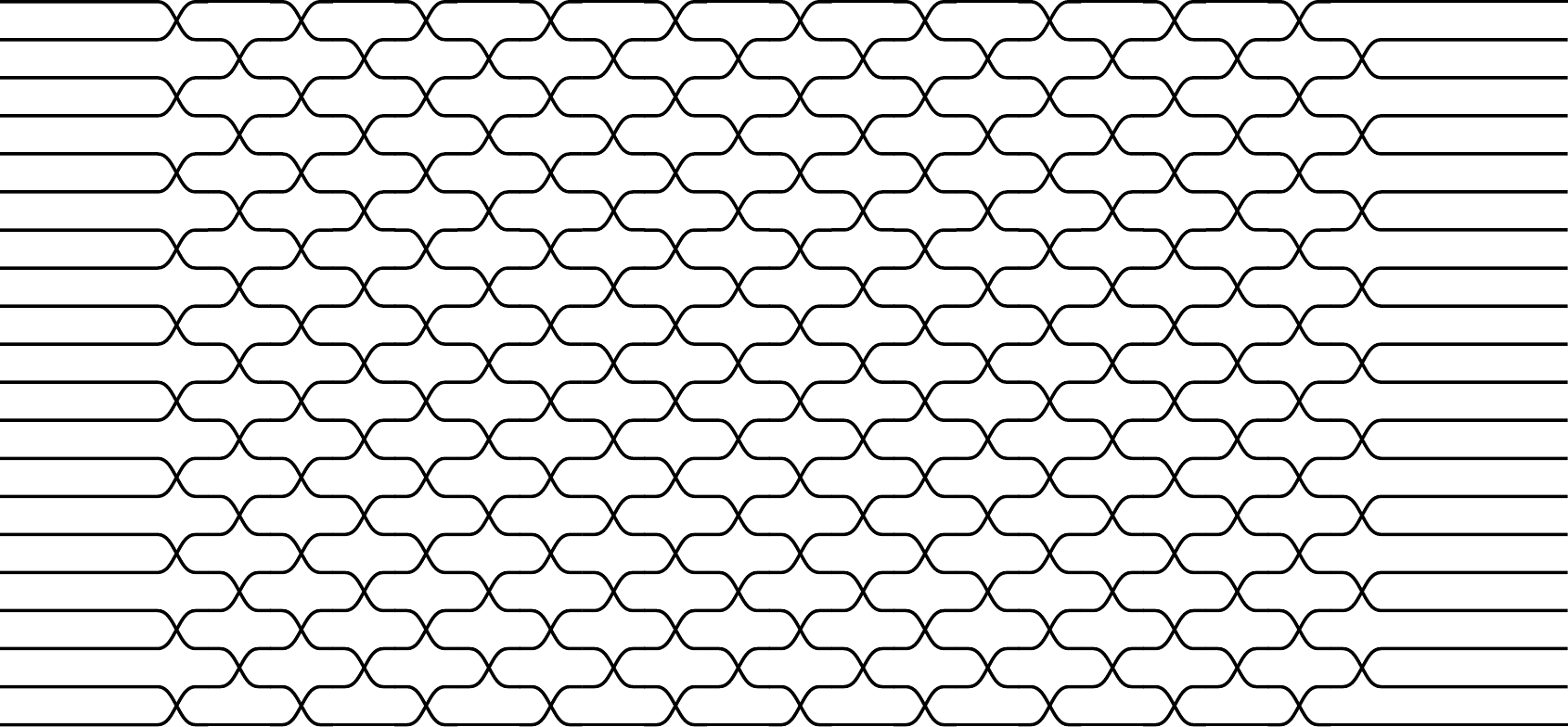}
    \caption{The ``odd-even sort arrangement'' on $20$ pseudolines.}
    \label{fig:bubble20}
\end{figure}

\begin{figure}[h]
    \centering
    \includegraphics[width=0.99\linewidth]{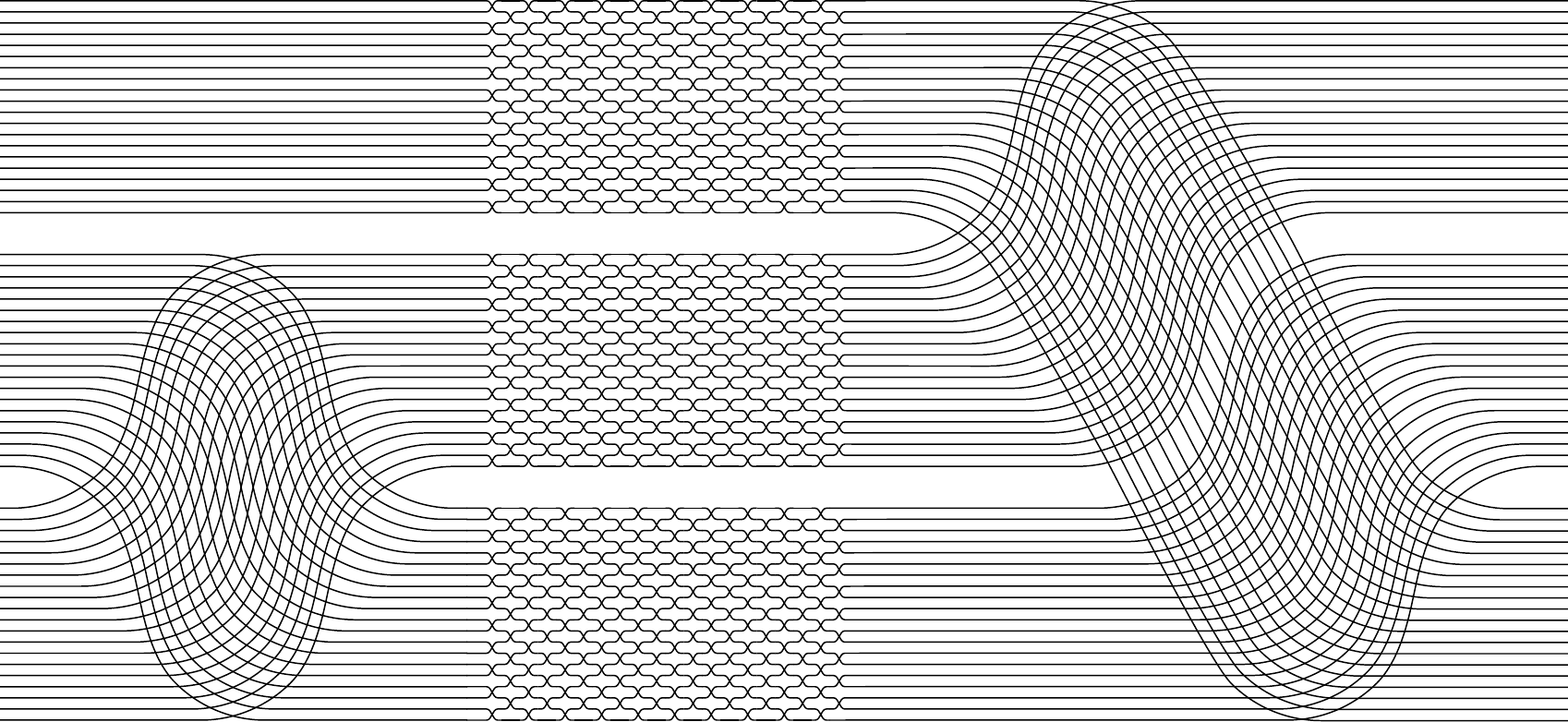}
    \caption{The arrangement $A_3$, obtained by stacking two copies of $A_1$ and adding the missing intersections on the left, then stacking a third copy of $A_1$ on top and adding missing intersections on the right.}
    \label{fig:A3}
\end{figure}
We describe such an arrangement of $n_0 = 2160$ pseudolines. We start with the ``odd-even sort arrangement'' $A_1$ of $20$ pseudolines illustrated in Figure~\ref{fig:bubble20} (who's number of cutpaths can be counted using a formula given by Galambos and Reiner~\cite[Thm.\ 4]{GalambosR08}). For $i \geq 2$ we construct $A_i$ by stacking a copy of $A_1$ above $A_{i-1}$, and adding all missing intersections to the left (resp.\ the right) of them if $i$ is even (resp.\ odd).\footnote{If $A_1$ consisted of a single pseudoline instead of an arrangement of $20$ pseudolines, this procedure would results in so-called ``bubblesort arrangements".} Figure~\ref{fig:A3} illustrates the construction of $A_3$. The arrangement we consider is $A_{108}$, which has $n_0 = 20\cdot 108 = 2160$ pseudolines. It can be checked computationally that this arrangement has more than $2.083^{2160}$ cutpaths\footnote{The exact number is {\npthousandsep{\ }\numprint{23381059977428658244401290752012080837753909224529342117895265969594738964238847418593419609924786300458530435964137137762992929813221336803597097861109775564495459444435804062766740947504211747793597814174740178577284544007168971158885523979884915284797730672401701367033432566980031798653682291712525849971756399918432958213457056998095930747712607108532656455076872297515794022138470093346160678426557238340364776777530322444698136691208450273444733072849224965053685585711998368211884309472894045263460665694690783306626435839163738432420337396543911780171875924849478471799746161814460486526190957001717675329412240320718648002278562080048913942616793104712581227837269762672733591598}.}} (this is simply a matter of counting paths in the dual directed acyclic graph, which can be carried out in a linear number of arithmetic operations in the size of said graph, i.e.\ quadratic in the number of pseudolines). This shows the following.
\begin{theorem}
    For any $n>0$, there exists an arrangement of $n$ pseudolines with $\Omega(2.083^n)$ cutpaths.
\end{theorem}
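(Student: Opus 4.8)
The plan is to reduce the claim to producing a single arrangement on a fixed number $n_0$ of pseudolines with at least $2.083^{n_0}$ cutpaths, and then to exhibit such an arrangement explicitly with $n_0 = 2160$. First I would treat the reduction. Suppose we have an arrangement $A_0$ on $n_0$ pseudolines with $c \ge 2.083^{n_0}$ cutpaths. Given $n > n_0$, write $q = \floor{n/n_0}$ and place $q$ vertically separated disjoint copies of $A_0$ on consecutive blocks of wires in a wiring diagram, add an arbitrary arrangement on the remaining $n - q n_0$ wires, and then insert every still-missing crossing (these are exactly the crossings between pseudolines of different blocks) in an arbitrary way compatible with the wiring diagram, for instance resolving the missing crossings of each pair of blocks entirely on one side. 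The point is that in the resulting arrangement the dual DAG is a "serial composition" in the north--south direction of the dual DAGs of the $q$ copies (interleaved with the small DAGs of the crossing slabs), so that any cutpath restricted to the slab of one copy is a cutpath of that copy, and conversely any choice of a cutpath in each copy glues, through the crossing slabs, into a cutpath of the whole arrangement. Hence the number of cutpaths is at least $c^{q} \ge 2.083^{q n_0} \ge 2.083^{\,n - n_0}$, which is $\Omega(2.083^n)$ since $2.083^{-n_0}$ is an absolute constant; for $n \le n_0$ the bound is immediate with the same constant because every arrangement has at least one cutpath. This is the content of the first two paragraphs of the section and it reduces everything to the construction.

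Next I would build the arrangement. Let $A_1$ be the odd-even sort arrangement on $20$ pseudolines of Figure~\ref{fig:bubble20}, whose cutpath count is given by the formula of Galambos and Reiner~\cite[Thm.\ 4]{GalambosR08}; one checks that it exceeds $2.083^{20}$ (about $2.3\cdot 10^6$). For $i \ge 2$ let $A_i$ be obtained by stacking a fresh copy of $A_1$ above $A_{i-1}$ and resolving the missing crossings on the left if $i$ is even and on the right if $i$ is odd (Figure~\ref{fig:A3}); this is a valid simple wiring diagram, and by the same serial-composition argument as above the number of cutpaths of $A_i$ is at least the product of those of $A_1$ and of $A_{i-1}$. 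Iterating, the arrangement $A_{108}$ has at least $(\text{cutpaths of }A_1)^{108} \ge 2.083^{2160}$ cutpaths and exactly $n_0 = 20\cdot 108 = 2160$ pseudolines, which is what the reduction needs.

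Finally I would verify the count. Counting cutpaths of a fixed arrangement on $N$ pseudolines is exactly counting source-to-sink paths in its dual DAG, which has $O(N^2)$ vertices and edges; a single sweep in topological order with one (big-integer) addition per edge does this in $O(N^2)$ arithmetic operations. Running this on $A_{108}$ yields the explicit integer quoted in the footnote, whose base-$2$ logarithm exceeds $2160\lg 2.083 \approx 2287$, so it is larger than $2.083^{2160}$, completing the proof.

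The main obstacle I expect is purely a matter of care rather than depth: making precise, in the two places where cutpaths are "glued across a crossing slab", that resolving the missing crossings entirely on one side keeps the wiring diagram simple and that the composite dual DAG genuinely contains the dual DAGs of the pieces in series, so that the path counts multiply exactly (a lower bound is all we need, which helps). The only other point of trust is the exact big-integer evaluation of the cutpath count of $A_{108}$, which one should either recompute independently or accept as stated. Choosing $A_1$ to be the odd-even sort arrangement instead of a single pseudoline is precisely what pushes the base above the previously known $2.076$, and the Galambos--Reiner formula is the tool that makes the per-pseudoline gain $(\text{cutpaths of }A_1)^{1/20}$ explicit.
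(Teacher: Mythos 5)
There is a genuine gap, and it sits exactly where your argument does real work. Your reduction (stack copies, resolve inter-copy crossings arbitrarily, path counts multiply as a lower bound) matches the paper's first paragraph and is fine. But you then claim that the odd-even sort arrangement $A_1$ on $20$ pseudolines already has more than $2.083^{20}$ cutpaths, and you obtain the count for $A_{108}$ purely as the product $(\text{cutpaths of }A_1)^{108}$. That premise is unjustified and, by the paper's own account, false: the previously best rate obtainable by stacking a single small block was Bílka's $2.076$ (presumably with a $28$-line block), and the paper remarks that other starting sizes give \emph{better} bounds than $20$ "initially" while $20$ only wins in the limit of the iterated process. So the per-pseudoline rate of $A_1$ alone is below $2.083$ (indeed below $2.076$), i.e.\ $A_1$ has fewer than $2.083^{20}$ cutpaths, and the product bound on $A_{108}$ does not reach $2.083^{2160}$. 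This is also why the paper needs $n_0=2160$ at all: if your claim about $A_1$ were true, the theorem would follow immediately with $n_0=20$ and the entire construction of $A_{108}$, the alternating left/right crossing blocks, and the $\approx 689$-digit computation in the footnote would be pointless.

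The missing idea is that the gain comes precisely from cutpaths that are \emph{not} of product type: after stacking a copy of $A_1$ above $A_{i-1}$ and resolving the missing crossings on one side, a cutpath may descend partway, then weave through the added crossing block, interleaving crossings of pseudolines from different copies; the alternation of sides between consecutive levels keeps generating such extra paths. These are exactly the paths the product bound discards, and they are what lift the rate from roughly $2.076$ to above $2.083$. Consequently the paper does not derive the count of $A_{108}$ by multiplying block counts; it counts source-to-sink paths in the dual DAG of the full $2160$-line arrangement computationally and checks that this exceeds $2.083^{2160}$, and only then invokes the stacking reduction with $n_0=2160$. Your final paragraph does describe such a DAG computation, but in your write-up it merely "confirms" a product bound whose premise fails; to repair the proof you must drop the claim about $A_1$ and make the exhaustive count of $A_{108}$ (or some other genuine lower bound capturing the inter-copy paths) the load-bearing step.
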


Note that the starting arrangement of $20$ pseudolines $A_1$ locally maximizes the number of cutpaths in the sense that no triangle flip (an operation by which we move one of the three pseudolines bounding a triangular cell across the intersection of the two others) can increase this number. This is not the case for our final arrangement $A_{108}$, but we have found that the greedy strategy by which we always perform the flip which increases the number of cutpaths the most until reaching a local maximum does not improve our bound significantly. Also, while starting with other sizes for the starting arrangement can lead to better bounds initially (the best previous bound of $2.076^n$ by Ond\v{r}ej Bílka was presumably obtained by considering $n=28$), our choice of $20$ seems to maximize the limit of the bound for our stacking process.

\section{A potential avenue for improvements}

In our upper bound on $B_n$, we have exploited the leading constant of $9.5$ in the tight Zone Theorem. While this constant can not be improved in general, we note that we do not need the full power of this theorem in our proof. Indeed, by appropriately choosing the order in which we encode the pseudolines, it is enough for our purpose to know that in any pseudoline arrangement, at least one of the lines has a small zone complexity (as opposed to all of the lines having at most a certain zone complexity). This would lead to an additional term of order $\Theta(n\lg n)$ in the length (in bits) of the encoding, but this is negligible in front of the leading term of order $\Theta(n^2)$. In this light, we propose the two following conjectures (the second being a weaker consequence of the first).
\begin{conjecture}\label{conj:average_zone}
    Let $\mathcal{L}$ be an arrangement of $n$ pseudolines. Then the average of the zone complexities of the pseudolines in $\mathcal{L}$ is at most $9n+O(1)$.
\end{conjecture}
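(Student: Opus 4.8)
The plan is to recast the conjecture as a purely combinatorial statement about face sizes and then refine the Zone Theorem in an amortised sense. For a cell $c$ write $d_c$ for the number of edges on its boundary. A straightforward double counting shows that the sum of the zone complexities over all $n$ pseudolines equals $\sum_c d_c \cdot s_c$, where $s_c$ is the number of distinct pseudolines supporting $c$; since in a simple arrangement the edges of a bounded cell lie on pairwise distinct pseudolines, $s_c=d_c$ up to the contribution of the $2n$ unbounded cells. Hence Conjecture~\ref{conj:average_zone} is equivalent to $\sum_c d_c^2 \le 9n^2 + O(n)$. The unbounded cells need a little care, since the north- and south-cells can each have $\Theta(n)$ edges (their boundaries are the upper and lower envelopes) and thus contribute up to $\Theta(n^2)$ to this sum; but there are only $2n$ unbounded cells and the total complexity of the outer layer of the arrangement is $O(n)$, so one bounds their contribution separately and reduces to controlling $\sum_c d_c^2$ over the bounded cells.

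The next step is to split the sum. Write $\sum_c d_c^2 = \sum_c d_c + \sum_c d_c(d_c-1)$; the first term is $2n^2+O(n)$. In the second term, for each cell split the ordered pairs of distinct boundary edges into those that are consecutive on $\partial c$ (equivalently, meet at a vertex of the arrangement) and those that are not. The consecutive pairs are counted \emph{exactly}: each of the $\binom{n}{2}$ vertices of the arrangement is a corner of the $4$ cells around it, contributing $4$ unordered (hence $8$ ordered) consecutive pairs, for a total of $4n^2+O(n)$. So the conjecture reduces to showing that the number $N$ of ordered pairs of \emph{non-consecutive} boundary edges sharing a common cell satisfies \[N \le 3n^2 + O(n),\] equivalently that, averaged over all $\binom{n}{2}$ pairs $\{\ell,\ell'\}$ of pseudolines, the number of cells bounded by an edge of $\ell$ and an edge of $\ell'$ not incident to $\ell\cap\ell'$ is at most $3$. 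Applying the tight Zone Theorem zone-by-zone (which gives $\sum_c d_c^2 \le 9.5n^2+O(n)$) only yields the weaker bound $3.5$ on this average, so the whole content of the conjecture is concentrated here: saving the additive $\tfrac12 n^2$.

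For a single pair $\{\ell,\ell'\}$ this count can be as large as $\Theta(n)$ — take many pseudolines slicing across one of the four wedges at $\ell\cap\ell'$ — so only the average is tractable and one cannot argue pair-by-pair. My plan is to revisit a tight proof of the pseudoline Zone Theorem (in the style of Pinchasi) and track, when it is run simultaneously for all $n$ choices of base pseudoline, the slack that must accumulate: the configuration forcing one zone to be extremal cannot force all $n$ zones to be extremal at once, and the goal is to convert this intuition into a rigorous amortised inequality; an alternative is a direct discharging scheme that spreads the charge of each non-consecutive co-bounding edge pair over nearby vertices and edges and bounds the charge received everywhere. I expect this last step to be the genuine obstacle: the known arguments yielding the tight constant $9.5$ for a single zone are already delicate, and extracting the further $\tfrac12 n^2$ in an averaged sense — without which one recovers nothing beyond the existing bound — seems to require an idea outside the current toolbox, which is presumably why the statement is posed as a conjecture rather than proved here.
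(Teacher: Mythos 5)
This statement is posed in the paper as a conjecture precisely because no proof of it is known, and your proposal does not close that gap: it is a (correct and cleanly executed) reformulation plus a research plan, with the decisive step left open — as you yourself acknowledge in the final sentence. The double counting $\sum_\ell \mathrm{zone}(\ell)=\sum_c d_c\, s_c$, the identity $s_c=d_c$ (each pseudoline contributes at most one edge to a cell), the split $\sum_c d_c^2=\sum_c d_c+\sum_c d_c(d_c-1)$ with $\sum_c d_c=2n^2+O(n)$, and the exact count of $8\binom{n}{2}$ ordered consecutive pairs via the four corners at each vertex are all fine, and they correctly reduce Conjecture~\ref{conj:average_zone} to the bound $N\le 3n^2+O(n)$ on non-consecutive co-bounding pairs, equivalently an average of $3$ such cells per pair of pseudolines. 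You also correctly observe that the tight Zone Theorem only gives $3.5$ on average, and that a pair-by-pair bound is impossible since a single pair can co-bound $\Theta(n)$ cells non-consecutively, so any proof must genuinely exploit averaging.

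But this reduction is essentially an equivalent restatement of the conjecture, not progress toward it: all of the quantitative content — saving the additive $\tfrac12 n^2$, i.e.\ lowering the averaged constant from $9.5$ to $9$ — is deferred to either ``rerunning a Pinchasi-style tight zone proof simultaneously for all base lines and tracking the slack'' or ``a direct discharging scheme,'' neither of which is specified, let alone carried out. There is no identified source of slack, no charging rule, and no argument that the extremal configurations for different zones are incompatible; these are exactly the missing ideas. So the verdict is that the proposal contains a genuine gap coinciding with the open problem itself; the paper offers no proof either (it only notes, via the lines supporting a regular $(2k+1)$-gon, that the constant $9$ could not be improved further), so your reduction may still be a useful way to attack the question, but it does not constitute a proof.
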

\begin{conjecture}
    Let $\mathcal{L}$ be an arrangement of $n$ pseudolines. Then there is at least one pseudoline in $\mathcal{L}$ whose zone complexity is at most $9n+O(1)$.
\end{conjecture}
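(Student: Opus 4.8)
The plan is to establish the stronger Conjecture~\ref{conj:average_zone} (the bound on the \emph{average} zone complexity), which immediately yields the statement above, since in any arrangement at least one pseudoline has zone complexity no larger than the average. The first step is to rewrite the quantity $\sum_{\ell\in\mathcal L} Z(\ell)$, where $Z(\ell)$ denotes the zone complexity of $\ell$, cell-by-cell rather than pseudoline-by-pseudoline. If $e_c$ is the number of edges bounding a cell $c$, then a pseudoline lies on $\partial c$ exactly when it carries one of these edges, and by pseudo-convexity of cells it carries precisely one; hence $c$ is supported by exactly $e_c$ pseudolines and $\sum_{\ell} Z(\ell) = \sum_c e_c^2$. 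The goal thus becomes to show $\sum_c e_c^2 \le 9n^2 + O(n)$, after which one picks any pseudoline whose zone does not exceed the average $9n + O(1)$.

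It is worth recording first why the easy route fails. From $\sum_c e_c = 2n^2$ (each of the $n^2$ edges bounds two cells) and the cell count $\binom{n}{2} + n + 1$, Cauchy--Schwarz already gives $\sum_c e_c^2 \ge 8n^2 - O(n)$, while the Zone Theorem of the excerpt gives $\sum_c e_c^2 = \sum_\ell Z(\ell) \le 9.5n^2 - O(n)$; so the target lies in a window of width only about $n^2$. One would like to squeeze it using just the $f$-vector $(p_k)_{k\ge 3}$ (the number of $k$-gon cells) together with the identities $\sum_k p_k = \binom{n}{2} + O(n)$ and $\sum_k k\,p_k = 2n^2 + O(n)$, nonnegativity, $p_3 \ge n$, and the trivial tail bound $\sum_{j\ge k} p_j \le 2n^2/k$. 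But these constraints do not even reprove that $\sum_c e_c^2 = O(n^2)$: the resulting linear program over admissible $(p_k)$ is unbounded, because the tail bound $2n^2/k$ is not summable against the weight $(k-4)^2$. Consequently a proof must use genuine structure beyond the $f$-vector, i.e.\ it must refine the proof of the Zone Theorem itself, exploiting that an arrangement forced to be ``almost all quadrilaterals'' cannot simultaneously carry many triangles and many large cells.

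Accordingly the main plan is to reopen the known charging proof that gives the tight bound $9.5n - 3$ for a \emph{single} pseudoline --- e.g.\ Pinchasi's argument, or the left-bounding/right-bounding-edge version --- and to run the charging scheme for all $n$ pseudolines at once. In the single-zone proof each edge of the zone receives a bounded amount of charge, the coefficient $9.5$ coming from a small number of ``expensive'' local configurations along $\ell$ that each contribute an extra half-unit. Globally, a fixed edge $e$ of the arrangement lies in the zone of only the $e_{c_1} + e_{c_2} - 2$ pseudolines bounding the two cells $c_1, c_2$ adjacent to $e$, so $e$ receives a bounded number of charges in total; what one needs to prove is that, summed over all $n$ zones, the \emph{excess} charges (those responsible for pushing a zone above $9n$) number only $O(n)$ in total, rather than up to $\tfrac12 n$ per pseudoline. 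The natural mechanism is an amortization: an expensive configuration in $\mathrm{zone}(\ell)$ should be paid for either by a corresponding saving in the zones of the few pseudolines producing that configuration, or by the impossibility of the configuration recurring along $\ell$. Carrying out this double counting --- pinning down the exact expensive patterns and showing they cannot proliferate across the arrangement --- is the step I expect to be the main obstacle, and it is also where the precise constant $9$ (rather than some weaker constant between $9$ and $9.5$) would be won or lost.

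Two preliminary sanity checks seem essential. First, since the conjectured constant $9$ is strictly below the $9.5$ that is tight for one pseudoline, the conjecture silently asserts that in every extremal arrangement only a few pseudolines have near-maximal zones; if instead the known tight construction could be arranged so that \emph{all} $n$ zones reach $\approx 9.5n$, the average would be $\approx 9.5n$ and the conjecture would be false. So I would first compute $\tfrac1n\sum_\ell Z(\ell)$ for small $n$ and for the known $p_3$-maximal and Zone-Theorem-tight constructions, to confirm the average really hovers near $9n$ and to see how much slack the amortization must capture. Second, I would check whether adjoining to the $f$-vector linear program the extra inequality $\sum_k k^2 p_k \le 9.5n^2$ (the Zone Theorem itself) together with the best available bounds on the number of cells with many edges (from $(\le k)$-level complexity) already forces the optimum below $9n^2 + O(n)$; if it does, the conjecture follows with no reopening of any proof, and if it does not, the residual gap identifies precisely which new structural inequality the refined charging argument must supply.
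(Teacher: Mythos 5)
The statement you are addressing is not a proven result in the paper: it is stated there as an open conjecture (the weaker of two), and the paper offers no proof of it, only the remarks that the constant $9$ would be tight (regular $(2k+1)$-gon line arrangements) and that the general per-line constant $9.5$ of the Zone Theorem cannot be lowered. Your proposal likewise does not prove it. Your preliminary reductions are sound: the identity $\sum_{\ell} Z(\ell)=\sum_c e_c^2$ (each cell with $e_c$ edges lies in exactly $e_c$ zones, since in a simple pseudoline arrangement each pseudoline supports at most one edge of a given cell), the observation that averaging would give the weaker conjecture from the stronger one, and the correct diagnosis that the Zone Theorem applied line-by-line only gives an average of $9.5n+O(1)$, so that something genuinely new is needed to close the gap from $9.5$ to $9$. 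But the decisive step --- the global amortized charging argument showing that the ``excess'' charges responsible for pushing individual zones above $9n$ total only $O(n)$ over all $n$ zones --- is exactly the content of the conjecture, and you explicitly leave it as ``the main obstacle'' without identifying the expensive configurations, the saving mechanism, or any argument that the amortization can in fact be made to work. The two ``sanity checks'' you propose (computing averages on known extremal examples, and testing whether an $f$-vector linear program augmented with $\sum_k k^2 p_k\le 9.5n^2$ and level-complexity bounds already forces $\sum_c e_c^2\le 9n^2+O(n)$) are reasonable exploratory steps, but neither is carried out, and the second is speculative: there is no reason given why those constraints should suffice, and your own earlier analysis suggests the $f$-vector alone is far too weak. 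In short, what you have is a plausible research plan for attacking an open problem, not a proof; the missing idea is precisely the structural inequality (or refined multi-zone charging scheme) that would separate the average-case constant $9$ from the worst-case single-zone constant $9.5$.
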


The question of improving the average zone complexity was previously raised by Zerbib~\cite{zerbib2011zone} (without proposing an explicit constant), who showed that a weaker statement implied by such an improvement holds true. Note that we cannot hope for a better constant than $9$ in our conjectures, as one can construct even straight line arrangements meeting this bound (for example by taking all lines supporting the edges of a regular convex $(2k+1)$-gon). If either conjecture holds true, our method would give an upper bound of $B_n < 2^{0.6074n^2}$ for large enough $n$. We note in passing that a confirmation of Conjecture \ref{conj:average_zone} could also for example replace the use of the Zone Theorem in the work of Goaoc and Welzl~\cite{goaoc2023convex}. This would improve the upper bound on the variance of the number of extreme points of a uniformly random labeled $n$-point order type from $3$ to $2+o(1)$.

\bibliography{bibliography}

\end{document}